\newtheorem{theorem}{Theorem}
\newtheorem{lemma}{Lemma}
\theoremstyle{definition}
\begin{document}


\title{\bf Transfer Learning with Uncertainty Quantification: Random Effect Calibration of Source to Target (RECaST)}
 \author{Jimmy Hickey, Jonathan P. Williams, Emily C. Hector\hspace{.2cm}\\
  Department of Statistics, North Carolina State University}
  \date{}
 \maketitle
 
 \bigskip
\begin{abstract}
Transfer learning uses a data model, trained to make predictions or inferences on data from one population, to make reliable predictions or inferences on data from another population. Most existing transfer learning approaches are based on fine-tuning pre-trained neural network models, and fail to provide crucial uncertainty quantification. We develop a statistical framework for model predictions based on transfer learning, called {\em RECaST}. The primary mechanism is a Cauchy random effect that recalibrates a source model to a target population; we mathematically and empirically demonstrate the validity of our RECaST approach for transfer learning between linear models, in the sense that prediction sets will achieve their nominal stated coverage, and we numerically illustrate the method's robustness to asymptotic approximations for nonlinear models. Whereas many existing techniques are built on particular source models, RECaST is agnostic to the choice of source model.  For example, our RECaST transfer learning approach can be applied to a continuous or discrete data model with linear or logistic regression, deep neural network architectures, etc.  Furthermore, RECaST provides uncertainty quantification for predictions, which is mostly absent in the literature.  We examine our method's performance in a simulation study and in an application to real hospital data.
\end{abstract}

\noindent%
{\it Keywords:} Domain adaptation, Electronic health records, Informative Bayesian prior, Model calibration.
\vfill

\section{Introduction}
\label{Section: Introduction}

The use of artificial intelligence and machine learning (ML) is frequently limited in practice by a shortage of available training data and insufficient computational resources. To address these difficulties, transfer learning has developed as a powerful idea for leveraging the resources at leading institutions such as research hospitals (e.g., institutions having high quality data, exceptional research clinicians, high performance computing environments, etc.) to facilitate implementation of ML technologies in resource scarce settings such as small or rural hospitals.  Developments in transfer learning methodologies are necessary to overcome resource allocation inequities, and they will likely drive the next decade of innovation in ML technologies.  

Transfer learning consists broadly of two elements.  The first is one or more {\em target} population(s) of interest that are associated with data sets for which there are resource limitations preventing the training of sophisticated models (e.g., a small hospital).  The second is a {\em source} population (or populations) that is separate but in some way related to the target population.  The source is associated with extensive data and/or resources for training sophisticated ML models. The premise of transfer learning is to use trained source models to aid in the training of target models.  The source and targets are each composed of two components: a {\em domain}, denoted $\mathcal{D}$, and a {\em task}, denoted $\mathcal{T}$. A domain $\mathcal{D} := \{\mathcal{X}, P(x)\}$ consists of a feature space $\mathcal{X}$ and a marginal probability distribution $P(x)$ over $x \in \mathcal{X}$. A task $\mathcal{T} := \{\mathcal{Y}, P(y\mid x)\}$ is composed of a label space $\mathcal{Y}$ and a conditional distribution $P(y\mid x)$ over $y \in \mathcal{Y}$ given $x \in \mathcal{X}$.  Traditional ML is described by the source and target sharing the same domain, $\mathcal{D}_{S} = \mathcal{D}_{T}$, and sharing the same task, $\mathcal{T}_{S} = \mathcal{T}_{T}$.  Transfer learning problems arise when the source and target domains and/or the source and target tasks are similar but different.

We propose a new Bayesian transfer learning framework termed {\em RECaST} that focuses on transfer learning problems characterized by shared domains and label spaces, but with potentially different feature-to-label mappings $P(y\mid x)$. For example, source and target hospitals might record largely the same patient data features, but nuances in clinician practices/procedures, inconsistencies in data quality, population disparities, etc. may affect the suitability of using the source mapping as the target mapping.
Two primary advantages of RECaST are its scalability, requiring estimation of only 2-3 parameters with no tuning parameters, and that it is agnostic to the source model specification. Importantly, RECaST only requires the source model and parameter estimates, not the source data itself; this is an immense benefit to applications with privacy concerns, such as with medical data. In addition to common prediction summaries, RECaST also provides uncertainty quantification in the form of posterior predictive credible sets, which is largely absent in current methods. Further, RECaST is asymptotically valid in the canonical case of distinct source and target Gaussian linear models, in that the coverage of prediction sets are guaranteed to asymptotically achieve their stated nominal level of significance.

To evaluate our proposed RECaST approach, we design synthetic simulation studies with both continuous and binary response data reflecting a variety of difficulty levels of transfer learning problems.  Next, we investigate the performance of RECaST in real data simulations that arise by permuting real patient data from the multi-center eICU Collaborative Research Database \citep{pollard2018eicu}.  A variety of both point-valued and set-valued prediction metrics are considered, including the empirical coverage of prediction sets.  The performance of RECaST is compared to other state-of-the-art transfer learning approaches, including deep learning approaches and approaches designed specifically for clinical data \citep[e.g.,][]{wiens2014}.  The approach of \cite{wiens2014} was developed to improve predictions for the risk of hospital associated infection with Clostridium difficile in target hospitals using a source data augmented transfer learning strategy.

General survey papers on transfer learning topics include \cite{pan2009, lu2015, weiss2016, dube2020}.
For hospital disease risk and mortality prediction problems, \cite{wiens2014}, \cite{gong2015}, and \cite{desautels2017} propose transfer learning approaches based on training algorithms using a learned weighted combination of source and target patient observations.  These methods learn many parameters and require access to the source data -- limitations that are avoided by RECaST.  In \cite{paul2016}, \cite{raghu2019}, and \cite{AHISHAKIYE2021118}, approaches are considered to improve classification accuracy for medical imaging tasks using pre-trained deep neural networks (DNNs) on the ImageNet database \citep{deng2009}.  In the context of ICU patient monitoring, in \cite{shickel2021} a data augmenting-based transfer learning approach is built for fitting a single-layer recurrent neural network trained on electronic health records (EHR) and wearable device data.  Their model is limited in scope to only predicting the binary response of successful versus unsuccessful discharge from a hospital.  Implemented in \cite{gao_cui2021} is a transfer learning strategy for precision medicine in survival analysis with clinical omics data sets via freezing layers of a pre-trained Cox neural network.  Developed in \cite{lee2012} is a method using support vector machines to predict surgical mortality.  Another approach, from \cite{gu2022syntl}, is to generate additional synthetic target data from a source data set and adjust for heterogeneity in order to predict extreme obesity from medical records and genomics data.  An example of low-dimensional representation transfer learning is given in \cite{maurer2016}, and {\em online} transfer learning is considered in \cite{zhao2014,wu2017}.  These applied methods are useful in modeling specific pieces of EHR data for prediction, but lack uncertainty quantification.  Additionally, some require the learning of many parameters and access to the entire source data set.

Bayesian transfer learning adaptations include \cite{baxter1998}, \cite{raina2006}, \cite{wohlert2018}, \cite{bueno2019}, \cite{chandra2020}, \cite{yang2020}, \cite{zhou2020}, \cite{abba2022}; all except \cite{baxter1998} and \cite{raina2006} are based on priors specified from neural network models fitted to source data sets.  A posterior distribution fitted to a source DNN model is used as a prior on the parameters for the target task in \cite{wohlert2018}, and the model is trained using mean field variational Bayes \citep[for a reference on variational Bayes, see][]{zhang2018}.  Boosting approaches to transfer learning are considered by \cite{freund1999}, \cite{dai2007}, and \cite{desautels2017}.  In \cite{abba2022}, a penalized complexity prior between the source and target tasks is considered.  These surveyed strategies are predominantly non-model based, purely empirical, and lack a unified underlying framework.  Moreover, those that focus on fine-tuning pre-trained neural network models on a target data set require the source model to be a neural network, and often fail to provide crucial uncertainty quantification.

Recently, there have been efforts to investigate theoretical properties of transfer learning approaches.  For instance, a learning method based on LASSO for high-dimensional penalized linear regression is considered in \cite{li2020transfer}, while diminishing the effect of {\em negative transfer}. Negative transfer occurs when including source data negatively impacts the performance on target data.  In a similar setting, asymptotically valid confidence intervals for generalized linear model parameters in high-dimensional transfer learning problems are established in \cite{tian2022transfer}.  This technique is adapted to a more complicated federated transfer learning setting in \cite{li2021targeting}.  A parameter is defined in \cite{cai2021transfer} to calculate an ``effective sample size"  to quantify total amount of information that can be transferred when the source and target conditional distributions differ. This approach is extended in \cite{reeve2021adaptive}, where assumptions are relaxed on the relationship between the source and target conditional distributions. \cite{Hector-Martin} propose and study the inferential properties of an information-driven shrinkage estimator that is robust to heterogeneity between source and target feature-to-label mappings but assumes this mapping is of the same parametric form. These methods offer more mathematically rigorous motivations, but are restrictive in their modeling options. Such restrictions are eliminated in our proposed framework.
 
The remainder of our paper is organized as follows.  In Section \ref{section: recast framework}, we develop the theoretical basis for RECaST and its uncertainty quantification.  We then develop Bayesian parameter estimation and prediction procedures in both the continuous and binary response cases in Sections \ref{section: continuous} and \ref{Section: Binary Response Data}, respectively.  We conduct extensive simulation studies in Section \ref{section: Simulation Study} by exploring transfer learning problems of a range of difficulties.  Section \ref{Section: eICU} considers a real data analysis for predicting shock in ICU data.  Section \ref{Section: conclusion} concludes.  Proofs and computational details are provided in the Appendix.

\section{RECaST Framework}\label{section: recast framework}

Our transfer learning problem is defined by the following four assumptions: (i) there is a well-developed structural component of the prediction model for the source domain denoted by $f(\btheta_{S}, x_{S})$; (ii) there exist ample source data for estimating the parameter(s) $\btheta_{S}$; (iii) $\mathcal{X}_{S} = \mathcal{X}_{T}$, and the structural component of the target prediction model, denoted by $g(\btheta_{T}, x_{T})$, is believed to be {\em similar} to $f(\btheta_{S}, x_{T})$; and (iv) there does not exist sufficient target data for reliably estimating the parameter(s) $\btheta_{T}$.  The notion of {\em similarity} will be defined in the construction of our RECaST framework for transfer learning, presented next.

Denote the forward data-generating representations of $P(y_{S}\mid \bx_{S})$ and $P(y_{T}\mid \bx_{T})$, respectively, by 
\begin{equation}\label{equation: dge}
\begin{split}
Y_{S} &= h\big\{f(\btheta_{S}, \bx_{S}), U_{S}\big\} \ \ \text{and} \\
Y_{T} & = h\big\{g(\btheta_{T},\bx_{T}), U_{T}\big\}, 
\end{split}
\end{equation}
where $\mathcal{X}_{S} = \mathcal{X}_{T} = \R^{p}$, and $U_{T}$ and $U_{S}$ are independent auxiliary random variables.  For example, if 
\[
\begin{split}
f(\btheta_{S}, \bx_{S}) & = \bx_{S}^{\top}\btheta_{S}, \\
h(\bx_{S}^{\top}\btheta_{S}, U_{S}) & = \bx_{S}^{\top}\btheta_{S} + U_{S}, \ \ \text{and} \\
U_{S} & \sim \mathcal{N}(0,1),                   
\end{split}
\]
then $Y_{S} \sim \mathcal{N}(\bx_{S}^{\top}\btheta_{S}, 1)$.  Or in the case of binary outcome data, for example, if 
\[
\begin{split}
f(\btheta_{S}, \bx_{S}) & = \text{expit}(\bx_{S}^{\top}\btheta_{S}), \\
h(\bx_{S}^{\top}\btheta_{S}, U_{S}) & = \mathbf{1}\{U_{S} < \text{expit}(\bx_{S}^{\top}\btheta_{S})\}, \ \ \text{and} \\
U_{S} & \sim \text{Uniform}(0,1), \\
\end{split}
\]
then $Y_{S} \sim \text{Bernoulli} \{ \text{expit}(\bx_{S}^{\top}\btheta_{S}) \}$, where $\text{expit}(z) := e^{z}/(1+e^{z})$.  The {\em similarity} between the source and target that makes this a formulation of a transfer learning problem is determined by how well the structural component $f(\btheta_{S}, \bx_{T})$ of the source model approximates the structural component $g(\btheta_{T}, \bx_{T})$ of the target model.  

Accordingly, transfer learning should be effective if $\beta :=  g(\btheta_{T},\bx_{T}) / f(\btheta_{S},\bx_{T}) \approx 1$, and sufficient source data is available for reliable estimation of $\btheta_{S}$; in fact, the source and target models are identical if $\beta = 1$.  Assuming $f(\btheta_{S},\bx_{T}) \ne 0$ {\em almost surely} (a.s.), it follows a.s. that 
\begin{equation}\label{transfer_learning_model}
Y_{T,i} = h\big\{\beta_{i} \cdot f(\btheta_{S},\bx_{T,i}), U_{T,i}\big\},
\end{equation}
for $i \in \{1, \dots, n_{T}\}$, where $Y_{T,1}, \dots, Y_{T,n_{T}}$ is an independent sample of $n_{T}$ target labels with associated features $\bx_{T,1}, \dots, \bx_{T,n_{T}}$, and $\beta_{i} := g(\btheta_{T},\bx_{T,i}) / f(\btheta_{S},\bx_{T,i})$.  The identity given by Equation (\ref{transfer_learning_model}) is further motivated by the fact that, for linear source and target models, if we assume $\bx_{T,1}, \dots, \bx_{T,n_{T}} \overset{\text{iid}}{\sim} \mathcal{N}_{p}({\bf 0}, \bI_{p})$, then by Lemma \ref{lemma: cauchy} (a well-known result for which we provide a proof in Appendix \ref{appendix: proof}, for convenience), $\beta_{i} = (\bx_{T,i}^{\top}\btheta_{T}) / (\bx_{T,i}^{\top}\btheta_{S}) \sim \text{Cauchy}(\delta, \gamma)$, with
\begin{align*}
\delta & =  \frac{ \btheta_{T}^{\top} \btheta_{S} }{ \norm{\btheta_{S}}^{2} }, \ \ \text{and} \\
\gamma & = \frac{ 1 }{ \norm{\btheta_{S}}^{2} }  \sqrt{ \norm{\btheta_{S}}^{2} \norm{\btheta_{T}}^{2} - (\btheta_{T}^{\top} \btheta_{S})^{2} }.
\end{align*}

\begin{lemma}\label{lemma: cauchy}
For any $\ba, \bb \in \mathbb{R}^{p}$, if $\bx \sim \mathcal{N}_{p}({\bf 0}, \bI_{p})$ then $(\bx^{\top}\ba) / (\bx^{\top}\bb) \sim \text{Cauchy}(\delta, \gamma)$, with $\delta = \ba^{\top} \bb / \norm{\bb}^{2}$ and $\gamma = \norm{\bb}^{-2} \sqrt{ \norm{\bb}^{2} \norm{\ba}^{2} - (\ba^{\top} \bb)^{2} }$.
\end{lemma}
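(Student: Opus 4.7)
The plan is a direct orthogonal-decomposition argument, reducing the ratio to an affine transformation of a ratio of two independent centered normals, which is classically Cauchy.

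First, I would project $\ba$ onto $\bb$ and its orthogonal complement. Write $\ba = \alpha \bb + \bc$, where $\alpha := \ba^{\top}\bb / \norm{\bb}^{2}$ and $\bc := \ba - \alpha \bb$ satisfies $\bc^{\top}\bb = 0$. Then $\bx^{\top}\ba = \alpha \bx^{\top}\bb + \bx^{\top}\bc$, so that
\begin{equation*}
\frac{\bx^{\top}\ba}{\bx^{\top}\bb} = \alpha + \frac{\bx^{\top}\bc}{\bx^{\top}\bb},
\end{equation*}
provided $\bx^{\top}\bb \ne 0$, which holds almost surely since $\bx^{\top}\bb \sim \mathcal{N}(0, \norm{\bb}^{2})$ (assuming $\bb \ne \mathbf{0}$; the degenerate case is trivial).

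Second, I would establish independence of numerator and denominator. The pair $(\bx^{\top}\bc, \bx^{\top}\bb)$ is jointly Gaussian with zero mean, and its cross-covariance is $\bc^{\top}\bb = 0$ by construction. Hence the two components are independent, with $\bx^{\top}\bc \sim \mathcal{N}(0, \norm{\bc}^{2})$ and $\bx^{\top}\bb \sim \mathcal{N}(0, \norm{\bb}^{2})$.

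Third, I would invoke the standard fact that the ratio of two independent centered normals $\mathcal{N}(0,\sigma_{1}^{2}) / \mathcal{N}(0,\sigma_{2}^{2})$ is $\text{Cauchy}(0, \sigma_{1}/\sigma_{2})$ (derivable either by a direct change-of-variables in the joint density or by noting each factor can be written as $\sigma_{i} Z_{i}$ with $Z_{i} \sim \mathcal{N}(0,1)$ i.i.d.). Applied here, $(\bx^{\top}\bc)/(\bx^{\top}\bb) \sim \text{Cauchy}(0, \norm{\bc}/\norm{\bb})$, and an affine shift gives $(\bx^{\top}\ba)/(\bx^{\top}\bb) \sim \text{Cauchy}(\alpha, \norm{\bc}/\norm{\bb})$. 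Finally, I would simplify the scale parameter using $\norm{\bc}^{2} = \norm{\ba}^{2} - \alpha^{2}\norm{\bb}^{2} = \norm{\ba}^{2} - (\ba^{\top}\bb)^{2}/\norm{\bb}^{2}$, whence $\norm{\bc}/\norm{\bb} = \norm{\bb}^{-2}\sqrt{\norm{\bb}^{2}\norm{\ba}^{2} - (\ba^{\top}\bb)^{2}}$, matching the claimed $\gamma$, and $\alpha$ matches the claimed $\delta$.

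There is no real obstacle; the only subtlety worth stating cleanly is the independence step (uncorrelated jointly Gaussian components are independent), since it is what legitimizes citing the Cauchy-ratio identity. The rest is algebra.
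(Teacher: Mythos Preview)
Your argument is correct. The only difference from the paper's proof is one of packaging: the paper observes that $(\bx^{\top}\ba,\bx^{\top}\bb)$ is bivariate normal with correlation $\rho = \ba^{\top}\bb/(\norm{\ba}\,\norm{\bb})$ and then quotes the general Hinkley (1969) formula for the ratio of two \emph{correlated} centered normals, namely $V/W \sim \text{Cauchy}(\rho\,\sigma_{V}/\sigma_{W},\,(\sigma_{V}/\sigma_{W})\sqrt{1-\rho^{2}})$, after which the stated $\delta,\gamma$ drop out by substitution. You instead orthogonalize first, writing $\ba = \alpha\bb + \bc$ so that the ratio becomes $\alpha$ plus a ratio of \emph{independent} centered normals, and then invoke only the elementary $\mathcal{N}(0,\sigma_{1}^{2})/\mathcal{N}(0,\sigma_{2}^{2}) \sim \text{Cauchy}(0,\sigma_{1}/\sigma_{2})$ fact. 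Your route is marginally more self-contained (it effectively re-derives the special case of Hinkley's identity that is needed rather than citing it), at the cost of one extra line of algebra to recover $\norm{\bc}/\norm{\bb}$ in the required form. Either way the substance is the same bivariate-normal computation.
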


In practice, we assume without loss of generality that features have been centered and scaled to have mean zero and unit variance.  Central limit theory supports the Gaussian approximation for more complex, nonlinear models (i.e., for the large $p$ scenarios that characterize modern ML approaches).  Specifically, appealing to the Lyapunov or Lindeberg central limit theorem gives Gaussian approximations for the distributions of $\bx_{T,i}^{\top}\btheta_{S} / \|\btheta_{S}\|^{2}$ and $\bx_{T,i}^{\top}\btheta_{T} / \|\btheta_{T}\|^{2}$.  For more general assumptions on $f$ and $g$, first order Taylor approximations motivate $f(\btheta_{S}, \bx_{T,i}) \approx \bx_{T,i}^{\top}\btheta_{S}$ and $g(\btheta_{T}, \bx_{T,i}) \approx \bx_{T,i}^{\top}\btheta_{T}$.  The edge case with $\gamma \to \infty$ describes a situation in which there is no link between the source and target domains.  Assuming $\gamma < \infty$, the RECaST model specified by Equation \eqref{transfer_learning_model} with random effect $\beta_{i} \sim \text{Cauchy}(\delta, \gamma)$ fully characterizes the {\em similarity} between the source and target domains.  In addition to being the exact distribution in the linear model case with Gaussian features, the Cauchy distribution also provides benefit through its heavy tails.  This attribute allows $\beta_i$ to capture large disparities between source and target data sets, improving the frequentist coverage of resulting prediction sets.

Estimating parameters of Cauchy distributions is a notoriously difficult problem since the heavy tails allow outlying events to happen with relatively high probability \citep{schuster2012parameter}.  Some estimation procedures focus on estimating solely the location parameter \citep{zhang2010highly} or the scale parameter \citep{kravchuk2012hodges}, but rarely both.  \cite{fegyverneki2013simple} explores the trade-off between using simple robust estimators, for both parameters, which are less asymptotically efficient than the maximum likelihood estimators.  Recently, limit theorems are established in \cite{akaoka2022limit} for quasi-arithmetic means for point estimation in cases where the strong law of large numbers fails, such as with Cauchy random variables.  The fact that the Cauchy distribution appears in our work speaks to the difficulty of a transfer learning problem.

There are three primary advantages of our RECaST transfer learning model formulation in Equation \eqref{transfer_learning_model} with random effect $\beta_{i} \sim \text{Cauchy}(\delta, \gamma)$.  First, regardless of the complexity of the source model (e.g., $f(\btheta_{S},\cdot)$ could represent a DNN with millions of parameters in $\btheta_{S}$ trained on extensive source data), RECaST only ever requires estimation of the parameters $\delta$ and $\gamma$, and perhaps a scale parameter associated with $U_{T,i}$ through $h(\cdot,U_{T,i})$. Existing transfer learning methods require either estimation of $\btheta_{T}$ (often via fine-tuning from an estimate of $\btheta_{S}$) or learning of $n_{T} + n_{S}$ weights for pooling the source and target data, where $n_{S}$ is the number of source training labels.  The scalability of our approach cannot be overstated.  Second, RECaST needs no source data, only requiring the estimated source parameters $\widehat{ \btheta }_{S}$.  Such a feature is vital in applications such as with medical data where privacy constraints place legal and ethical barriers to accessing certain data sets.  Third, RECaST naturally facilitates uncertainty quantification of target label predictions via the construction of prediction sets.  We assess the empirical coverage with respect to the nominal significance level of the prediction sets.  The following two sections propose a Bayesian framework for estimation of the posterior predictive distribution of target labels in the continuous and binary response settings, respectively.

\section{Continuous Response Data}\label{section: continuous}

\subsection{Model and Estimation}
\label{subsection: continuous model and estimation}

Assume that $Y_{S,1}, \dots, Y_{S,n_{S}}$ and $Y_{T,1}, \dots, Y_{T,n_{T}}$ are mutually independent, continuous random variables generated according to source and target models, respectively, as expressed in Equation (\ref{equation: dge}). Also assume that an estimator $f(\widehat{\btheta}_{S},\bx)$ is available for any feature vector $\bx \in \X_{S} = \X_{T}$, where $\widehat{\btheta}_{S}$ is an estimator of $\btheta_{S}$ based on $Y_{S,1}, \dots, Y_{S,n_{S}}$.  In the continuous response setting, a natural choice for the $h$ function in the RECaST model, defined by Equation \eqref{transfer_learning_model}, is the Gaussian innovation formulation,
\[
Y_{T,i} = \beta_{i}\cdot f(\widehat{\btheta}_{S},\bx_{T,i}) + \sigma\cdot U_{T,i},
\]
independently for $i \in \{1,\dots,n_{T}\}$, where $U_{T,i} \sim \mathcal{N}(0, 1)$, $\sigma > 0$ is a scaling parameter to be learned from the target data, and $\beta_{i} \sim \text{Cauchy}(\delta, \gamma)$.  

Denoting by $\pi(\delta, \gamma, \sigma)$ a prior density on $(\delta, \gamma, \sigma)$, a posterior distribution of the parameters $(\delta, \gamma, \sigma)$ can be expressed as
\begin{align*}
& \pi\big(\delta, \gamma , \sigma \mid y_{T,1}, \dots, y_{T,n_{T}}, \widehat{\btheta}_{S}\big) \\
& = \int_{\R}\dots \int_{\R} \pi\big(\delta, \gamma, \sigma, \beta_{1}, \dots, \beta_{n_{T}} \mid y_{T,1}, \dots, y_{T,n_{T}}, \widehat{\btheta}_{S}\big) \ d\beta_{1} \dots d\beta_{n_{T}} \\
& \propto \pi(\delta, \gamma, \sigma) \cdot \int_{\R}\dots \int_{\R} \prod_{i=1}^{n_{T}}\Big[\mathcal{N}\big\{ y_{T,i} \mid \beta_{i} f(\widehat{\btheta}_S, \bx_{T,i}), \sigma^{2} \big\} \cdot \text{Cauchy}( \beta_{i} \mid \delta, \gamma )\Big] \ d\beta_{1} \dots d\beta_{n_{T}} \\
& = \pi(\delta, \gamma, \sigma) \cdot \prod_{i=1}^{n_{T}} \int_{\R} \mathcal{N}\big\{ y_{T,i} \mid \beta_{i} f(\widehat{\btheta}_S, \bx_{T,i}), \sigma^{2} \big\} \cdot \text{Cauchy}( \beta_{i} \mid \delta, \gamma ) \ d\beta_{i} \\
& = \pi(\delta, \gamma, \sigma) \cdot \prod_{i=1}^{n_{T}} \int_{\R} \frac{ \text{Cauchy}( \beta_{i} \mid \delta, \gamma ) }{ \mid f(\widehat{\btheta}_S, \bx_{T,i}) \mid } \cdot \mathcal{N}\bigg\{ \beta_{i} \ \mid \ \frac{  y_{T,i} }{ f(\widehat{\btheta}_{S}, \bx_{T,i}) }, \frac{ \sigma^{2} }{ f^{2}(\widehat{\btheta}_S, \bx_{T,i})  } \bigg\} \ d\beta_{i},
\numberthis \label{equation: continuous posterior}
\end{align*}
where the univariate integrals in the last expression can be evaluated numerically.  Next, the posterior predictive distribution of the label $\widetilde{Y}_{T}$ associated with some new target feature vector $\widetilde{\bx}_{T}$ can be derived as the marginal distribution of 
\begin{align*}
& \pi\big(\widetilde{y}_{T}, \widetilde{\beta}, \sigma, \delta, \gamma \mid y_{T,1}, \dots, y_{T,n_{T}}, \widehat{\btheta}_{S} \big) \\ 
&  = \mathcal{N}\{\widetilde{y}_{T} \mid \widetilde{\beta}f( \widehat{\btheta}_{S}, \widetilde{\bx}_{T}), \sigma^{2}\}\cdot \text{Cauchy}(\widetilde{ \beta } \mid \delta, \gamma) \cdot \pi\big(\delta, \gamma, \sigma \mid y_{T,1}, \dots, y_{T,n_{T}}, \widehat{\btheta}_{S}\big). \numberthis \label{equation: continuous posterior predictive}
\end{align*}
We specify a canonical prior on $(\delta, \gamma, \sigma)$ as 
\[
\pi(\delta, \gamma , \sigma) = \mathcal{N}(\delta \mid 1, \sigma_{\delta}^{2}) \cdot \log \mathcal{N}(\gamma \mid a,b) \cdot \log \mathcal{N}(\sigma \mid c,d),
\]
with diffuse choices of the hyperparameters $\sigma_{\delta}, a, b, c, d$.

\subsection{Remarks on Implementation}\label{subsection: continuous remarks on implementation}

To estimate the posterior distribution given in Equation \eqref{equation: continuous posterior}, we implement a random walk Metropolis-Hastings algorithm, numerically solving the univariate integrals with the \texttt{Julia} package \texttt{QuadGK} \citep{johnson2020quadgk}.  Furthermore, by expressing these integrals as expectations with respect to a Gaussian distribution (i.e., the final expression in Equation \eqref{equation: continuous posterior}), we show that they are numerically equivalent to definite integrals from $-39$ to $39$.  See Appendix \ref{appendix: Bounding Continuous Integral} for the mathematical details of this bound.  This substantially reduces the computational overhead for the numerical integration, and the fact that $n_{T}$ is assumed to be small for transfer learning problems avoids concerns about scalability.  Next, in Algorithm \ref{algorithm: continuous RECaST prediction}, we propose a procedure for drawing samples from the posterior predictive distribution described by Equation (\ref{equation: continuous posterior predictive}).  We showcase the effectiveness of these prosed computational strategies in a variety of simulation scenarios in Section \ref{sims_continuous_response}.  Posterior predictive credible sets can be constructed as usual in Bayesian inference, from the highest posterior density regions calculated via the empirical quantiles of the sampled posterior predictive values.
\begin{algorithm}[H]
 \caption{\footnotesize RECaST posterior predictive sampling: continuous response data}\label{algorithm: continuous RECaST prediction}
 \textbf{Input:} $\widetilde{\bx}_{T}$, samples from $\pi\big(\delta, \gamma, \sigma \mid y_{T,1}, \dots, y_{T,n_{T}}, \widehat{\btheta}_{S}\big)$, and sample sizes $n_{\text{post}}$, $n_{\beta}$, and $n_{Y}$ \\
 \textbf{Output:} A sample of values from $\pi\big(\widetilde{y}_{T} \mid y_{T,1}, \dots, y_{T,n_{T}}, \widehat{\btheta}_{S} \big)$
 \begin{algorithmic}

 \For{$i \gets 1$ to $n_{\text{post}}$}
     \State $\delta, \gamma, \sigma \gets \text{random} \big\{ \pi\big( \delta, \gamma, \sigma \mid y_{T,1}, \dots, y_{T,n_{T}}, \widehat{\btheta}_{S}\big)  \big\}$
     \For{$j \gets 1$ to $n_{\beta}$}
         \State $\widetilde{ \beta } \gets \text{random}\big\{ \text{Cauchy}(\delta, \gamma)  \big\}$
         \For{$k \gets 1$ to $n_{Y}$}
             \State $\widetilde{Y}_{T} \gets \text{random}\big[ \mathcal{N}\big\{\widetilde{ \beta } f(\widehat{ \btheta }_{S}, \widetilde{\bx}_{T}), \sigma^{2} \big\}\big]$
         \EndFor
     \EndFor
 \EndFor
 \end{algorithmic}
\end{algorithm}

\subsection{Theoretical Guarantees}
\label{subsection: Continuous Theoretical Guarantees}

In this section, we establish the asymptotic validity of our proposed posterior predictive credible sets in the case of linear source and target models with independent Gaussian innovations.  Here, asymptotic validity means that the empirical coverage of a $1 - \alpha$ level prediction credible set attains $1 - \alpha$ level coverage, asymptotically in $n_{T}$, as described by the result of Theorem \ref{theorem: convergence}.  Our mathematical proof of this result and of all supporting results are organized in Appendix \ref{appendix: proof}.

Suppose that $Y_{S,j} \sim \mathcal{N}(\bx_{S,j}^{\top} \btheta_{S}, \sigma^{2})$, independently for $j \in \{1,\dots,n_{S}\}$.  In the class of transfer learning problems we consider, it is assumed that consistent or meaningful estimators are available for all source model parameters, and that ample data/resources are available for estimating them.  Accordingly, assume that $n_{S}$ is sufficiently large so that $\btheta_{S}$ and $\sigma$ are regarded as known.  Next, assume that $Y_{T,1}, \dots, Y_{T,n_{T}} \overset{\text{iid}}{\sim} \mathcal{N}(\widetilde{\bx}^{\top} \btheta_{T}, \sigma^{2})$, for some feature vector $\widetilde{\bx} \in \mathcal{X}_{T} = \mathcal{X}_{S}$, and $\btheta_{T}$ unknown.  Leveraging the RECaST transfer learning framework, the likelihood function of $(\delta, \gamma)$ can be expressed as
\begin{equation}\label{equation:joint_likelihood}
L(\delta, \gamma \mid y_{T,1}, \dots, y_{T,n_{T}}, \beta_{1}, \dots, \beta_{n_{T}}) = \prod_{i=1}^{n_{T}}\Big[\mathcal{N}\big\{ y_{T,i} \mid \beta_{i} \widetilde{ \bx }^{\top}\btheta_S, \sigma^{2} \big\} \cdot \text{Cauchy}( \beta_{i} \mid \delta, \gamma )\Big].
\end{equation}

We investigate the asymptotic coverage of prediction sets constructed from the RECaST posterior predictive distribution with plugin maximum likelihood estimators (MLEs) $\widehat{\delta}$ and $\widehat{\gamma}$ for $\delta$ and $\gamma$, respectively:
\begin{equation*}
\pi(\widetilde{y}_{T}, \widetilde{ \beta } \mid y_{1,} \dots, y_{n_{T}}) = \mathcal{N}(\widetilde{y}_{T} \mid \widetilde{\beta} \widetilde{ \bx }^{\top} \btheta_{S}, \sigma^{2}) \cdot \text{Cauchy}(\widetilde{ \beta } \mid \widehat{ \delta }, |\widehat{ \gamma }|).
\end{equation*}
This is the same as considering maximum a posteriori (MAP) estimators for $\delta$ and $\gamma$ with a flat prior $\pi(\delta, \gamma) \propto 1$, and the choice of prior is not so meaningful in the $n_{T} \to \infty$ setting.  Recall that in the RECaST framework the $\beta_{1},\dots,\beta_{n_{T}}$ that appear in the likelihood function in Equation \eqref{equation:joint_likelihood} are iid $\text{Cauchy}(\delta, \gamma)$ random effects.  Nonetheless, we demonstrate with Lemma \ref{lemma: convergence of delta and gamma} that the MLEs $\widehat{\delta}$ and $\widehat{\gamma}$ converge in probability to fixed points such that 
\[
\pi(\widetilde{Y}_{T}, \widetilde{ \beta } \mid y_{1,} \dots, y_{n_{T}}) \approx \mathcal{N}(\widetilde{Y}_{T} \mid \widetilde{\beta} \widetilde{ \bx }^{\top} \btheta_{S}, \sigma^{2}) \cdot \mathbf{1}\bigg\{\widetilde{\beta} = \frac{  \widetilde{ \bx }^{\top} \btheta_{T}  }{  \widetilde{ \bx }^{\top} \btheta_{S}  }\bigg\} = \mathcal{N}(\widetilde{Y}_{T} \mid \widetilde{\bx}^{\top} \btheta_{T}, \sigma^{2}),
\]
as desired.  This fact leads to our main theoretical result, Theorem \ref{theorem: convergence}, which establishes the asymptotic validity of $1 - \alpha$ level RECaST prediction sets of the form $[a_{n_{T}}^{\alpha}, b_{n_{T}}^{\alpha}]$, with
\begin{align*}
a_{n_{T}}^{\alpha} & := \Phi^{-1}(\alpha/2) \cdot \sigma +  \widetilde{ \beta } \cdot \widetilde{ \bx }^{\top} \btheta_{S} \ \ \text{and} \\
b_{n_{T}}^{\alpha} & := \Phi^{-1}(1-\alpha/2) \cdot \sigma +  \widetilde{ \beta } \cdot \widetilde{ \bx }^{\top} \btheta_{S},
\end{align*}
for any $\alpha \in (0,1)$ and $\widetilde{\beta} \sim \text{Cauchy}(\widehat{ \delta }, |\widehat{ \gamma }|)$.

\begin{lemma}\label{lemma: convergence of delta and gamma}
Assuming $Y_{T,1}, \dots, Y_{T,n_{T}} \overset{\text{iid}}{\sim} \mathcal{N}(\widetilde{\bx}^{\top} \btheta_{T}, \sigma^{2})$ and $\beta_{1},\dots,\beta_{n_{T}} \overset{\text{iid}}{\sim} \text{Cauchy}(\delta,\gamma)$, independently, the MLEs of $\delta$ and $\gamma$ for Equation \eqref{equation:joint_likelihood} satisfy 
\begin{align*}
\widehat{\delta} \longrightarrow \frac{  \widetilde{ \bx }^{\top} \btheta_{T}  }{  \widetilde{ \bx }^{\top} \btheta_{S}  } \quad \text{and} \quad \widehat{\gamma} \longrightarrow 0 
\end{align*}
in probability as $n_{T} \to \infty$.
\end{lemma}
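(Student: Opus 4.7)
The plan is to treat the estimation of $(\delta,\gamma)$ as an $M$-estimation problem for a misspecified marginal model and to identify the probability limit of the MLE as the minimizer of a Kullback--Leibler (KL) projection of the true data-generating law onto the assumed family. First I would integrate out the latent $\beta_i$ in Equation~\eqref{equation:joint_likelihood} to obtain the observed-data log-likelihood
$$
\ell_{n_T}(\delta,\gamma) = \sum_{i=1}^{n_T}\log p_{\delta,\gamma}(Y_{T,i}), \qquad p_{\delta,\gamma}(y) := \int_{\mathbb{R}} \mathcal{N}(y\mid\beta\mu_S,\sigma^{2})\,\text{Cauchy}(\beta\mid\delta,\gamma)\,d\beta,
$$
where $\mu_S := \widetilde{\bx}^\top\btheta_S$. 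The density $p_{\delta,\gamma}$ is a Voigt profile with Cauchy scale $\gamma|\mu_S|$ and Gaussian scale $\sigma$, centered at $\delta\mu_S$. The key structural observation is that, with $\delta_0 := \widetilde{\bx}^\top\btheta_T/\mu_S$, the true density $p_0 = \mathcal{N}(\widetilde{\bx}^\top\btheta_T,\sigma^{2})$ is the pointwise limit of $p_{\delta_0,\gamma}$ as $\gamma\downarrow 0$, since the Cauchy factor collapses to a Dirac mass at $\delta_0$. Hence $p_0$ lies on the boundary $\gamma=0$ of the assumed family.

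Next I would identify the KL projection. Writing $D(\cdot\|\cdot)$ for KL divergence, Jensen's inequality gives $E_{p_0}[\log p_{\delta,\gamma}(Y)] - E_{p_0}[\log p_0(Y)] = -D(p_0\|p_{\delta,\gamma}) \le 0$, with equality iff $p_{\delta,\gamma}=p_0$ almost everywhere. For any $\gamma>0$, the Voigt has polynomially decaying (Cauchy-type) tails while $p_0$ has Gaussian tails, so $p_{\delta,\gamma}\ne p_0$ and the KL is strictly positive; conversely $D(p_0\|p_{\delta_0,\gamma})\downarrow 0$ as $\gamma\downarrow 0$ by dominated convergence on the closure of the parameter space $\mathbb{R}\times[0,\infty)$. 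This identifies $(\delta_0,0)$ as the unique KL minimizer and establishes the target population-level statement.

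To upgrade this to convergence in probability of the MLE, I would invoke a Wald-style consistency argument. A uniform law of large numbers on compact subsets $K\subset\mathbb{R}\times[0,\infty)$ gives
$$
\sup_{(\delta,\gamma)\in K}\Bigl|n_T^{-1}\ell_{n_T}(\delta,\gamma) - E_{p_0}[\log p_{\delta,\gamma}(Y)]\Bigr|\xrightarrow{P}0,
$$
which combined with the unique-minimizer property forces any cluster point of $(\widehat{\delta},\widehat{\gamma})$ in $K$ to equal $(\delta_0,0)$. I would separately rule out escape of the MLE to infinity by showing that $E_{p_0}[\log p_{\delta,\gamma}(Y)]\to-\infty$ when $|\delta|\to\infty$ (the Voigt center drifts off the data) or $\gamma\to\infty$ (the density becomes too diffuse), so with high probability $(\widehat{\delta},\widehat{\gamma})$ stays in a large fixed compact set.

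The main obstacle will be the boundary nature of the limit: $(\delta_0,0)$ lies on the boundary of the natural parameter space $\gamma>0$, so standard interior-point MLE consistency theorems do not apply directly and the usual score/Fisher-information arguments break down. I would handle this by working throughout on the closed half-space $\gamma\ge 0$, using the continuous extension $p_{\delta,0} := \mathcal{N}(\delta\mu_S,\sigma^{2})$, and by constructing an integrable envelope controlling $|\log p_{\delta,\gamma}(y)|$ uniformly on compacts so that the uniform LLN survives the near-singular Cauchy factor as $\gamma\downarrow 0$ away from $\delta_0$. A secondary technical nuisance is quantifying how quickly the KL grows as one moves away from $(\delta_0,0)$ along the boundary versus into the interior, to make the Wald argument rigorous near the boundary.
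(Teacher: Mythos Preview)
Your proposal analyzes the wrong estimator. You integrate out the latent $\beta_i$ to obtain the marginal (Voigt) likelihood and then study its maximizer via a KL-projection/$M$-estimation argument. But the lemma, as stated, concerns the MLE of the \emph{joint} likelihood in Equation~\eqref{equation:joint_likelihood}, i.e., the likelihood conditional on both the $y_{T,i}$ \emph{and} the $\beta_i$. The paper makes this explicit in a supporting lemma: after the reparameterization $v_i=(\beta_i-\delta)/\gamma$ with $v_i\overset{\text{iid}}{\sim}\text{Cauchy}(0,1)$, the MLEs admit closed forms
\[
\widehat{\gamma}=\frac{\sum_i (v_i-\bar v)(y_{T,i}-\bar y_T)}{\widetilde{\bx}^{\top}\btheta_S\sum_i (v_i-\bar v)^2},\qquad \widehat{\delta}=\frac{\bar y_T}{\widetilde{\bx}^{\top}\btheta_S}-\bar v\,\widehat{\gamma},
\]
which are functions of both $(Y_{T,i})$ and the standard-Cauchy $(v_i)$. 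The convergence in probability is then taken with respect to the product law of these two independent sequences. Your marginal-likelihood MLE depends only on the $Y_{T,i}$ and is a genuinely different random variable; consistency of one does not imply consistency of the other, so the KL/Voigt argument, even if carried through on the boundary, would not establish the lemma as written.

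The paper's route exploits the closed-form expressions directly: $|\widehat{\gamma}|$ is bounded via Cauchy--Schwarz by $\|\bY-\overline{\bY}\|_2/(|\widetilde{\bx}^{\top}\btheta_S|\,\|\boldsymbol V-\overline{\boldsymbol V}\|_2)$; the numerator is controlled by a $\chi^2_{n_T-1}$ Chernoff bound, and the denominator is shown to grow at least like $n_T^{\alpha+1/2}$ using a result of Eicker on sums of squared Cauchy variables, yielding the stronger conclusion $n_T^{\alpha/2}\widehat{\gamma}\to 0$. This rate is then fed into the expression for $\widehat{\delta}$, together with the fact that $\overline V\sim\text{Cauchy}(0,1)$, to kill the cross term $\bar v\,\widehat{\gamma}$. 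None of this machinery appears in your proposal because you never arrive at these explicit estimators; to fix the approach you would need to abandon the marginalization step and work with the joint-likelihood MLEs directly.
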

\begin{theorem}\label{theorem: convergence}
Assume that $\widetilde{Y}_{T} \sim \mathcal{N}(\widetilde{\bx}^{\top} \btheta_{T}, \sigma^{2})$.  Then, for any $\alpha \in (0,1)$,
\[
P\Big(\widetilde{Y}_{T} \in [a_{n_{T}}^{\alpha}, b_{n_{T}}^{\alpha}] \Big) = \int_{a_{n_{T}}^{\alpha}}^{b_{n_{T}}^{\alpha}} \frac{1}{\sigma\sqrt{2\pi}}e^{-\frac{1}{2\sigma^{2}}(\widetilde{y}_{T} - \widetilde{\bx}^{\top} \btheta_{T})^{2}} d\widetilde{y}_{T} \longrightarrow 1 - \alpha
\]
in probability as $n_{T} \to \infty$.
\end{theorem}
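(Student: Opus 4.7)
The plan is to rewrite the probability in question as a simple expression involving $\Phi$, reduce it to showing that a single scalar quantity tends to $0$ in probability, and then invoke Lemma \ref{lemma: convergence of delta and gamma} together with a Slutsky-style argument to handle the Cauchy draw.

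First, I would introduce the shorthand $m := \widetilde{\bx}^\top \btheta_T$ and $\mu := \widetilde{\beta} \cdot \widetilde{\bx}^\top \btheta_S$. Conditioning on $\widetilde{\beta}$ (and on the target data through $\widehat{\delta}, \widehat{\gamma}$), the integral in the theorem statement is just the Gaussian CDF evaluated at the interval endpoints recentered at $m$ and rescaled by $\sigma$. After the cancellation of the $\widetilde{\beta}\cdot \widetilde{\bx}^\top \btheta_S$ terms inside the $\Phi$ arguments, this yields the clean expression
\begin{equation*}
P\bigl(\widetilde{Y}_T \in [a_{n_T}^\alpha, b_{n_T}^\alpha]\bigr)
= \Phi\!\Bigl(\Phi^{-1}(1-\alpha/2) + \tfrac{\mu - m}{\sigma}\Bigr) - \Phi\!\Bigl(\Phi^{-1}(\alpha/2) + \tfrac{\mu - m}{\sigma}\Bigr).
\end{equation*}
When $\mu = m$ this is exactly $1-\alpha$, so by continuity of $\Phi$ it suffices to show that $\mu - m \xrightarrow{p} 0$ as $n_T \to \infty$.

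Second, I would decompose $\mu - m$ by writing the Cauchy draw in location-scale form: $\widetilde{\beta} = \widehat{\delta} + |\widehat{\gamma}|\, Z$ with $Z \sim \text{Cauchy}(0,1)$ independent of the target data. Multiplying through by $\widetilde{\bx}^\top \btheta_S$ gives
\begin{equation*}
\mu - m = \bigl(\widehat{\delta} - \tfrac{\widetilde{\bx}^\top \btheta_T}{\widetilde{\bx}^\top \btheta_S}\bigr)\widetilde{\bx}^\top \btheta_S + |\widehat{\gamma}|\, Z \cdot \widetilde{\bx}^\top \btheta_S.
\end{equation*}
By Lemma \ref{lemma: convergence of delta and gamma}, the first summand tends to $0$ in probability. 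For the second summand, $|\widehat{\gamma}| \xrightarrow{p} 0$ and $Z$ is a fixed (hence bounded in probability) Cauchy variable independent of $\widehat{\gamma}$, so $|\widehat{\gamma}| Z \xrightarrow{p} 0$ by the standard ``$o_p \cdot O_p = o_p$'' fact. Hence $\mu - m \xrightarrow{p} 0$.

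Finally, I would combine these two pieces: the continuous mapping theorem applied to the difference of $\Phi$'s yields convergence in probability of the right-hand side to $\Phi(\Phi^{-1}(1-\alpha/2)) - \Phi(\Phi^{-1}(\alpha/2)) = 1-\alpha$, which is the claim. The main delicate step is the control of $|\widehat{\gamma}| Z$: one has to note carefully that $Z$ is drawn independently of the data, so its (undefined-mean) Cauchy distribution does not interfere with the asymptotics once the scale shrinks to zero; without this independence (or without $\widehat{\gamma} \to 0$), the heavy tails of the Cauchy would preclude the conclusion. Everything else is essentially continuity of $\Phi$ and an application of Lemma \ref{lemma: convergence of delta and gamma}.
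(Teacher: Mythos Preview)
Your proof is correct and follows the same overall structure as the paper's: both rewrite the coverage probability as a difference of standard normal CDFs, isolate the random shift $(\mu-m)/\sigma=(\widetilde{\beta}\,\widetilde{\bx}^{\top}\btheta_{S}-\widetilde{\bx}^{\top}\btheta_{T})/\sigma$, and use Lemma~\ref{lemma: convergence of delta and gamma} to drive that shift to zero in probability. The only difference is tactical: the paper treats $W=\Phi^{-1}(1-\alpha/2)+(\mu-m)/\sigma$ as a Cauchy variable with explicit location and scale, then computes $P(|\Phi(W)-(1-\alpha/2)|<\epsilon)$ directly via the arctan form of the Cauchy CDF; you instead write $\widetilde{\beta}=\widehat{\delta}+|\widehat{\gamma}|Z$ with $Z$ a fixed standard Cauchy and invoke $o_{p}\cdot O_{p}=o_{p}$ together with the continuous mapping theorem. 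Your route is shorter and avoids the explicit CDF calculation; the paper's route makes the dependence on the Cauchy location and scale more transparent, but the two arguments are otherwise equivalent.
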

In Section \ref{section: Simulation Study}, we provide empirical evidence that RECaST achieves near nominal coverage even in more practical, small $n_{T}$ settings, trained on target data that arise from both linear and non-linear models.  In the empirical investigations in Section \ref{section: Simulation Study}, we relax the assumptions of known $\sigma$ and the availability of repeated samples from a fixed feature vector $\widetilde{\bx}$.

\section{Binary Response Data}
\label{Section: Binary Response Data}

\subsection{Model and Estimation}
\label{subsection: binary model and estimation}

Assume that $Y_{S,1}, \dots, Y_{S,n_{S}}$ and $Y_{T,1}, \dots, Y_{T,n_{T}}$ are mutually independent, Bernoulli random variables generated according to source and target models, respectively, as expressed in Equation (\ref{equation: dge}). Also assume that an estimator $f(\widehat{\btheta}_{S},\bx)$ is available for any feature vector $\bx \in \X_{S} = \X_{T}$, where $\widehat{\btheta}_{S}$ is an estimator of $\btheta_{S}$ based on $Y_{S,1}, \dots, Y_{S,n_{S}}$.  In the binary response setting, a natural choice for the $h$ function in the RECaST model, defined by Equation \eqref{transfer_learning_model}, is the logistic model formulation,
\[
Y_{T,i} = \mathbf{1}\big[U_{T,i} < \text{expit}\big\{\beta_{i}\cdot f(\widehat{\btheta}_{S},\bx_{T,i})\big\}\big],
\]
with $U_{T,i} \sim \text{Uniform}(0,1)$ independently for $i \in \{1,\dots,n_{T}\}$ and $\beta_{i} \sim \text{Cauchy}(\delta, \gamma)$.

As in the continuous setting, the RECaST posterior distribution of the parameters can be constructed as
\begin{align*}
& \pi\big(\delta, \gamma \mid y_{T,1}, \dots, y_{T,n_{T}}, \widehat{\btheta}_{S}\big) \\
& = \int_{\R}\dots \int_{\R} \pi\big(\delta, \gamma, \beta_{1}, \dots, \beta_{n_{T}} \mid y_{T,1}, \dots, y_{T,n_{T}}, \widehat{\btheta}_{S}\big) \ d\beta_{1} \dots d\beta_{n_{T}} \\
& \propto \pi(\delta, \gamma) \cdot \prod_{i=1}^{n_{T}}\int_{\R} \text{Bernoulli}\big[y_{T,i} \mid \text{expit}\big\{\beta_{i} f(\widehat{\btheta}_{S},\bx_{T,i})\big\}\big]\cdot \text{Cauchy}(\beta_{i} \mid \delta, \gamma) \ d\beta_{i},
\numberthis \label{equation: binary posterior}
\end{align*}
and the posterior predictive distribution of the label $\widetilde{Y}_{T}$ associated with some new target feature vector $\widetilde{\bx}_{T}$ can be derived as the marginal distribution of  
\begin{align*}
& \pi\big(\widetilde{y}_{T}, \widetilde{\beta}, \delta, \gamma \mid y_{T,1}, \dots, y_{T,n_{T}}, \widehat{\btheta}_{S} \big) \\
& = \text{Bernoulli}\big[\widetilde{y}_{T} \mid \text{expit}\big\{\widetilde{\beta} f(\widehat{\btheta}_{S},\widetilde{\bx}_{T})\big\}\big] \cdot \text{Cauchy}(\widetilde{ \beta } \mid \delta, \gamma) \cdot \pi\big(\delta, \gamma \mid  y_{T,1}, \dots, y_{T,n_{T}}, \widehat{\btheta}_{S}\big).
\numberthis \label{equation: binary posterior predictive}
\end{align*}
We specify a canonical prior on $(\delta, \gamma)$ as 
\[
\pi(\delta, \gamma ) = \mathcal{N}(\delta \mid 1, \sigma_{\delta}^{2}) \cdot \log \mathcal{N}(\gamma \mid a,b),
\]
with diffuse choices of the hyperparameters $\sigma_{\delta}, a, b$.

A $1 - \alpha$ level RECaST prediction credible set, denoted $\Gamma_{n_{T}}^{\alpha}$, for binary response values is constructed as 
\begin{equation}\label{equation:binary_prediction_set}
\Gamma_{n_{T}}^{\alpha} = 
\begin{cases}
\{0\}, & \text{if} \quad \widetilde{p} < 1 - \widetilde{p} \quad \text{and} \quad 1 - \alpha \le 1 - \widetilde{p} \\
\{1\}, & \text{if} \quad 1 - \widetilde{p} \leq \widetilde{p} \quad \text{and} \quad 1 - \alpha \le \widetilde{p} \\
\{0,1\}, & \text{else},
\end{cases}
\end{equation}
where $\widetilde{p} := \pi\big(\widetilde{y}_{T} = 1 \mid y_{T,1}, \dots, y_{T,n_{T}}, \widehat{\btheta}_{S} \big)$.

\subsection{Remarks on Implementation}
\label{subsection: binary remarks on implementation}

The RECaST transfer learning computations in the binary response setting follow analogously to those described in Section \ref{subsection: continuous remarks on implementation}.  For completeness, Algorithm \ref{algorithm: binary RECaST prediction} specifies the procedure we propose for drawing samples from the posterior predictive distribution described by Equation \eqref{equation: binary posterior predictive}.

\begin{algorithm}[H]
 \caption{\footnotesize RECaST posterior predictive sampling: binary response data}\label{algorithm: binary RECaST prediction}
 \textbf{Input:} $\widetilde{\bx}_{T}$, samples from $\pi\big(\delta, \gamma \mid y_{T,1}, \dots, y_{T,n_{T}}, \widehat{\btheta}_{S}\big)$, and sample sizes $n_{\text{post}}$, $n_{\beta}$, and $n_{Y}$ \\
 \textbf{Output:} A sample of values from $\pi\big(\widetilde{y} \mid y_{T,1}, \dots, y_{T,n_{T}}, \widehat{\btheta}_{S} \big)$
 \begin{algorithmic}
 \For{$i \gets 1$ to $n_{\text{post}}$}
     \State $\delta, \gamma \gets \text{random} \big\{ \pi\big( \delta , \gamma \mid y_{T,1}, \dots, y_{T,n_{T}}, \widehat{\btheta}_{S}\big)  \big\}$
     \For{$j \gets 1$ to $n_{\beta}$}
         \State $\widetilde{ \beta } \gets \text{random}\big\{ \text{Cauchy}(\delta, \gamma) \big\}$
         \For{$k \gets 1$ to $n_{Y}$}
             \State $\widetilde{Y}_{T} \gets \text{random}\Big(\text{Bernoulli}\big[\text{expit}\big\{\widetilde{\beta} f(\widehat{\btheta}_{S},\widetilde{\bx}_{T})\big\}\big]\Big)$
         \EndFor
     \EndFor
 \EndFor
 \end{algorithmic}
\end{algorithm}

\section{Simulation Study}\label{section: Simulation Study}

\subsection{Objectives and Setup}\label{Subsection: Objectives and Setup}

In this section, we examine the finite sample performance of RECaST through simulations on synthetic data.
We consider continuous and binary responses with source models corresponding to linear (RECaST LM) and logistic (RECaST GLM) regression, respectively, as well as a DNN (RECaST DNN) source model for both response types. 

We generate the synthetic data from linear and logistic regressions with source parameter vector $\btheta_{S}$ and target parameter vector $\btheta_{T}$, with $p = 50$ features (including an intercept).  The features are generated from the standard Gaussian distribution, $\bx_{S,i}, \bx_{T,j} \sim \mathcal{N}_{p-1}(\boldsymbol{0}, \bI_{p-1})$.  We fix the source data generating parameters $\btheta_{S}$.
The source data generating parameters are set to $\btheta_{S} = (-\ba, \bb)$ where $\ba, \bb \in \mathbb{R}^{25}$ have components independently sampled from $\text{Uniform}(0.75, 5)$ and then fixed for all simulations.
The {\em similarity} of source and target domains is controlled by choosing the value of $\sigma_{\text{TL}} > 0$ in constructing $\btheta_{T} = \btheta_{S} + \boldsymbol{\epsilon}$ with $\boldsymbol{\epsilon} \sim \mathcal{N}_{p}(\bzero, \sigma_{\text{TL}}^{2} \bI_p)$.  We consider values of $\sigma_{\text{TL}}^{2} \in \{ 0, 0.25, 1, 4\}$.  Setting $\sigma_{\text{TL}}^{2} = 0$ corresponds to $\btheta_{T} = \btheta_{S}$, i.e., no difference between the source and target distributions.  We fix the source sample size at $n_{S} = 1,000$, and vary the target sample size $n_{T}$ to examine performance when $p < n_{T}$ ($n_{T} = 100, 250$), $p$ is near $n_{T}$ ($n_{T} = 40, 60$), and $p > n_{T}$ ($n_{T} = 20$).  We simulate 300 source and target data sets for each of these 20 combinations of $\sigma^{2}_{\text{TL}}$ and $n_{T}$ values, and implement the estimation procedures described in Sections \ref{subsection: continuous remarks on implementation} and \ref{subsection: binary remarks on implementation}.  See Appendix \ref{appendix: MCMC Implementation Details} for additional details about the specifics of our implementations.

A baseline for comparison is constructed from training a DNN on the target data, only, without any transfer learning, and we compare RECaST to other state-of-the-art transfer learning approaches.  We build a DNN on the source data and fine-tune the last layer on the target data (Unfreeze DNN); this is often referred to as freezing the weights of the source DNN and unfreezing the last layer.  In the binary setting, we also compare RECaST to the regularized logistic regression (Wiens) approach of \cite{wiens2014}.  This approach uses the combined source and target EHR data to build a regularized model for disease prediction -- a similar application to the real data we consider in Section \ref{Section: eICU}, but with the disadvantage that Wiens requires access to the source data where RECaST does not.

Throughout this section, all DNN training proceeds by setting aside a portion of the training data to be used as a calibration data set.
The final DNN parameters are chosen from the epoch with the minimum calibration loss to improve generalizability to out-of-sample test sets.  Additional details/specifications for our DNN training procedures are provided in Appendix \ref{appendix: Neural Network Training Procedure}.

\subsection{Continuous Response Results}\label{sims_continuous_response}

Out-of-sample root mean squared errors (RMSEs) for all methods, averaged over 300 source and target data sets are presented in Table \ref{table: continuous metric results}.

As expected, the performance of DNN deteriorates as the target sample size decreases.  Interestingly, the RECaST RMSE values remain consistent for each value of $\sigma^{2}_{\text{TL}}$, regardless of sample size, suggesting that RECaST is appropriate even when the target sample size is so small as to preclude a target-only analysis.  Meanwhile, Unfreeze DNN exhibits an increase in RMSE for each value of $\sigma_{\text{TL}}^{2}$ as $n_{T}$ decreases.  As source and target become more dissimilar, both Unfreeze DNN and RECaST exhibit similar increases in RMSE.  In fact, with $n_{T} = 250$ and $\sigma_{\text{TL}}^{2}=4$, the target-only DNN outperforms both transfer learning methods.  This setting is the most prone to negative transfer: the target sample size is large enough to learn meaningful DNN parameters,  {\em and} the source and target data distributions differ greatly, making transfer difficult.  While in some settings with larger target sample sizes the Unfreeze DNN slightly outperforms RECaST, it has larger standard errors and fails to provide uncertainty quantification. 

Table \ref{table: continuous coverage results} and Figure \ref{figure: continuous coverage} summarize the performance of the prediction uncertainty quantification provided by our RECaST framework implementations. Table \ref{table: continuous coverage results} presents the empirical coverage for 95\% nominal level prediction sets for each simulation setting.  RECaST methods consistently provide empirical coverage at or slightly above nominal levels, supporting the use of RECaST for inference on out-of-sample target domain predictions.  Additionally, Figure \ref{figure: continuous coverage} plots empirical versus nominal coverage for the $\sigma_{\text{TL}}^{2} = 0.25,\ n_{T} = 100$ and $\sigma_{\text{TL}}^{2} = 4 ,\ n_{T} = 20$ settings at a grid of nominal levels.  The empirical coverages consistently achieve associated nominal levels or are slightly conservative.

\begin{table}[H]
\centering\small
\begin{tabular}{llllll}
$n_{T}$ & $\sigma_{\text{TL}}^{2}$ & DNN & RECaST LM & RECaST DNN & Unfreeze DNN\\ \hline
250 & 0 & 
	2.8(0.376) & \textbf{0.517(0.0265)} & 1.21(0.0901) & 0.583(0.0375)\\
& 0.25 & 
	2.94(0.37) & 3.6(0.428) & 3.77(0.4) & \textbf{2.76(0.421)}\\
& 1 & 
	\textbf{3.14(0.431)} & 7.14(0.861) & 7.23(0.838) & 5.46(0.891)\\
& 4 & 
\textbf{3.67(0.524)} & 14.3(1.73) & 14.3(1.71) & 11.0(1.78)\\
\hline 
100 & 0 & 
	8.89(1.62) & \textbf{0.516(0.0222)} & 1.22(0.0954) & 0.805(0.0945)\\
& 0.25 & 
	9.08(1.28) & 3.58(0.423) & 3.76(0.396) & \textbf{3.24(0.567)}\\
& 1 & 
	9.43(1.29) & \textbf{7.1(0.848)} & 7.19(0.826) & 6.26(1.13)\\
& 4 & 
	\textbf{10.7(1.52)} & 14.2(1.7) & 14.2(1.68) & 12.5(2.1)\\
\hline 
60 & 0 & 
	13.5(2.47) & \textbf{0.521(0.0246)} & 1.23(0.109) & 1.48(0.288)\\
& 0.25 & 
	13.4(1.61) & \textbf{3.58(0.431)} & 3.76(0.415) & 3.68(0.775)\\
& 1 & 
	14.1(1.77) & 7.1(0.874) & 7.2(0.863) & \textbf{6.83(1.31)}\\
& 4 & 
	16.4(2.57) & 14.2(1.75) & 14.2(1.75) & \textbf{13.3(1.97)}\\
\hline 
40 & 0 & 
	16.8(2.59) & \textbf{0.521(0.0244)} & 1.22(0.0884) & 1.78(0.61)\\
& 0.25 & 
	17.1(2.38) & \textbf{3.6(0.409)} & 3.78(0.404) & 4.1(1.13)\\
& 1 & 
	17.8(2.46) & \textbf{7.15(0.825)} & 7.25(0.827) & 7.57(2.15)\\
& 4 & 
	20.2(2.93) & \textbf{14.3(1.65)} & 14.3(1.67) & 14.7(3.04)\\
\hline 
20 & 0 & 
	20.5(1.79) & \textbf{0.535(0.031)} & 1.24(0.106) & 2.49(2.19)\\
& 0.25 & 
	21.0(1.76) & \textbf{3.69(0.438)} & 3.86(0.422) & 4.73(2.54)\\
& 1 & 
	21.7(2.02) & \textbf{7.33(0.891)} & 7.41(0.869) & 8.49(3.5)\\
& 4 & 
	24.4(2.74) & \textbf{14.6(1.79)} & 14.7(1.77) & 15.7(3.95)
\end{tabular}
\caption{\footnotesize Out-of-sample RMSE (standard error) averaged over 300 source and target data sets for each setting; the out-of-sample test sets generated with each of the 300 target data sets each contain 250 observations. Bold values denote the lowest RMSE for each row.}
\label{table: continuous metric results}
\end{table}
\begin{table}[H]
\centering\small
\begin{tabular}{llll}
$n_{T}$ & $\sigma_{\text{TL}}^{2}$ & RECaST LM & RECaST DNN \\ \hline
250 & 0 & 
	95.5(1.77)& 94.3(1.93)\\
& 0.25 & 
	94.8(1.89)& 94.7(1.89)\\
& 1 & 
	94.8(1.9)& 94.8(1.84)\\
& 4 & 
	94.7(1.96)& 94.7(1.92) \\\hline
100 & 0 & 
	95.9(1.75)& 94.3(2.07)\\
& 0.25 & 
	96.0(1.8)& 95.6(2.0)\\
& 1 & 
	95.9(1.75)& 95.8(1.82)\\
& 4 & 
	95.8(1.82)& 95.8(1.87) \\\hline
60 & 0 & 
	96.5(2.17)& 94.4(2.61)\\
& 0.25 & 
	96.4(1.88)& 96.2(1.83)\\
& 1 & 
	96.4(1.79)& 96.2(1.79)\\
& 4 & 
	96.3(1.8)& 96.2(1.77) \\\hline
40 & 0 & 
	97.1(2.14)& 94.4(3.3)\\
& 0.25 & 
	96.3(2.38)& 96.0(2.39)\\
& 1 & 
	96.1(2.61)& 96.0(2.52)\\
& 4 & 
	95.9(2.76)& 95.8(2.78) \\\hline
20 & 0 & 
	97.8(1.83)& 95.1(2.98)\\
& 0.25 & 
	96.8(2.56)& 96.7(2.78)\\
& 1 & 
	96.8(2.61)& 96.8(2.83)\\
& 4 & 
	96.7(2.66)& 96.6(2.88)
\end{tabular}
\caption{\footnotesize Empirical coverage (standard error) at the 95\% nominal level, averaged over 300 source and target data sets for each setting; the out-of-sample test sets generated with each of the 300 target data sets each contain 250 observations. All reported values are multiplied by 100.}
\label{table: continuous coverage results}
\end{table}
\begin{figure}[H]
    \centering
    \includegraphics[scale=.39]{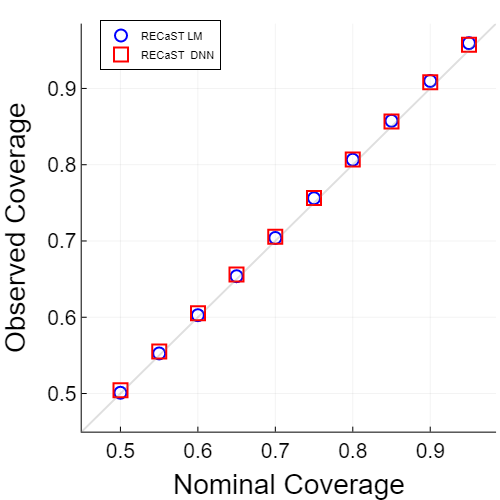}
    \includegraphics[scale=.39]{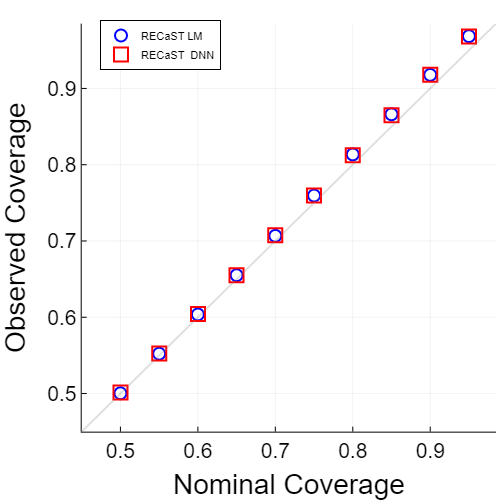}
    \caption{\footnotesize Reliability curves of the nominal coverage versus the empirical coverage, averaged over 300 source and target data sets for each setting; the out-of-sample test sets generated with each of the 300 target data sets each contain 250 observations. The left panel shows an easy setting: $n_{T} = 100$ and $\sigma_{\text{TL}}^{2} = 0.25$. The right panel shows a difficult setting: $n_{T} = 20$ and $\sigma_{\text{TL}}^{2} = 4$.}\label{figure: continuous coverage}
\end{figure}

\subsection{Binary Response Results}

Table \ref{table: discrete metric results} provides the area under the receiver operator characteristic curve (AUC) performance for all methods and simulation settings.  In all settings except one, RECaST DNN outperforms all other methods. We see similar patterns here as in the continuous setting. The RECaST models consistently report the highest AUC, with low standard errors across sample sizes. In contrast, the AUC of DNN and Unfreeze DNN drastically declines as $n_{T}$ decreases. As expected, the AUC of all transfer learning methods decreases as the difficulty of the problem increases with larger values of $\sigma^{2}_{\text{TL}}$. 
Table \ref{table: discrete coverage results} shows that RECaST procedures, again, provide near nominal coverages with low standard errors across sample sizes, while other methods fail.  Compared to the other approaches, RECaST provides substantial inferential advantages that are robust to small target sample sizes and large dissimilarity between source and target.  Recall from Equation \eqref{equation:binary_prediction_set} that prediction sets in the binary response setting are determined entirely by the Bernoulli probability of observing label 1.  Thus, we can construct prediction sets for the DNN, Unfreeze DNN, and Wiens methods, as well.  When a method fails to discriminate between the two labels at level $1 - \alpha$ (e.g., when the Bernoulli probability of success and failure are {\em both below} $1 - \alpha$), then the prediction set must include both labels to attain the $1 - \alpha$ level.  In such cases, as observed for the Wiens method in various settings in Table \ref{table: discrete coverage results}, the prediction set is guaranteed to achieve 100\% empirical coverage, but is unhelpful for prediction.

\begin{table}[H]
\centering\small
\begin{tabular}{lllllll}
$n_{T}$ & $\sigma_{\text{TL}}^{2}$ & DNN & RECaST GLM & RECaST DNN & Wiens & Unfreeze DNN\\ \hline
250 & 0 & 
	94.6(1.73) & 97.8(2.11) & \textbf{98.3(0.606)} & 80.4(3.54) & 97.4(1.21)\\
& 0.25 & 
	94.6(1.64) & 97.1(2.31) & \textbf{97.5(0.892)} & 79.9(3.87) & 96.9(1.16)\\
& 1 & 
	94.3(1.54) & 93.4(3.84) & \textbf{95.5(1.53)} & 79.1(3.87) & 95.1(1.66)\\
& 4 & 
	\textbf{94.5(1.66)} & 84.4(5.49) & 89.4(2.82) & 76.4(3.96) & 89.3(3.03)\\
\hline 
100 & 0 & 
	84.5(7.9) & 96.2(2.19) & \textbf{98.3(0.637)} & 80.7(4.17) & 96.1(2.21)\\
& 0.25 & 
	83.1(9.55) & 94.7(2.65) & \textbf{97.3(1.03)} & 80.0(4.14) & 94.9(1.91)\\
& 1 & 
	84.4(8.39) & 91.5(3.75) & \textbf{95.4(1.37)} & 78.5(4.43) & 93.2(2.4)\\
& 4 & 
	81.9(10.6) & 82.7(4.74) & \textbf{89.4(3.09)} & 74.3(4.33) & 86.7(4.8)\\
\hline 
60 & 0 & 
	72.2(12.7) & 95.8(1.87) & \textbf{98.0(1.0)} & 80.3(4.27) & 94.4(5.23)\\
& 0.25 & 
	73.8(10.7) & 94.1(2.46) & \textbf{97.1(1.44)} & 79.8(4.25) & 94.0(2.37)\\
& 1 & 
	74.6(10.3) & 90.3(3.53) & \textbf{95.2(1.65)} & 78.2(4.09) & 90.6(5.99)\\
& 4 & 
	72.3(10.6) & 82.7(3.97) & \textbf{88.8(3.33)} & 73.3(4.68) & 83.5(8.97)\\
\hline 
40 & 0 & 
	67.8(10.8) & 95.5(1.63) & \textbf{97.9(1.14)} & 80.1(3.82) & 93.5(4.47)\\
& 0.25 & 
	67.8(10.7) & 94.1(2.18) & \textbf{97.2(1.31)} & 79.8(4.0) & 92.0(6.69)\\
& 1 & 
	65.4(11.7) & 90.3(2.99) & \textbf{95.1(1.85)} & 78.1(3.88) & 88.9(7.93)\\
& 4 & 
	67.4(11.9) & 82.3(4.06) & \textbf{88.9(3.52)} & 73.5(4.22) & 80.0(11.5)\\
\hline 
20 & 0 & 
	60.2(8.68) & 95.7(1.72) & \textbf{97.3(1.35)} & 80.1(3.87) & 89.2(10.1)\\
& 0.25 & 
	60.8(9.11) & 93.5(2.07) & \textbf{96.5(1.87)} & 79.1(4.16) & 86.6(13.4)\\
& 1 & 
	59.7(9.46) & 89.8(2.73) & \textbf{94.3(2.47)} & 77.3(4.47) & 82.7(14.3)\\
& 4 & 
	62.3(8.2) & 82.2(3.46) & \textbf{88.2(3.03)} & 72.1(5.03) & 77.1(10.9)
\end{tabular}
\caption{\footnotesize Out-of-sample AUC (standard error) averaged over 300 source and target data sets for each setting; the out-of-sample test sets generated with each of the 300 target data sets each contain 250 observations. All reported values are multiplied by 100.  Bold values denote the highest AUC for each row.}
\label{table: discrete metric results}
\end{table}

\begin{table}[H]
\centering\small
\begin{tabular}{lllllll}
$n_{T}$ & $\sigma_{\text{TL}}^{2}$ & DNN & RECaST GLM & RECaST DNN & Wiens & Unfreeze DNN\\ \hline
250 & 0 & 
	84.1(9.51) & 95.4(0.78)& 95.5(0.0902)& 100.0(0.0) & 91.1(8.43)\\
& 0.25 & 
	88.6(7.51) & 95.2(0.819)& 95.5(0.133)& 100.0(0.0) & 93.4(6.67)\\
& 1 & 
	91.1(6.18) & 95.4(0.648)& 95.5(0.138)& 100.0(0.0) & 94.8(4.52)\\
& 4 & 
	93.0(6.4) & 95.3(0.399)& 95.2(0.392)& 98.7(1.53) & 94.5(4.75) \\\hline
100 & 0 & 
	79.6(11.8) & 95.6(1.09)& 95.7(0.251)& 100.0(0.0) & 90.4(8.66)\\
& 0.25 & 
	82.9(11.2) & 95.1(1.28)& 95.6(0.274)& 100.0(0.0) & 91.9(7.77)\\
& 1 & 
	88.2(8.21) & 94.9(1.23)& 95.6(0.346)& 99.5(4.61) & 93.6(5.9)\\
& 4 & 
	92.1(6.18) & 94.9(0.812)& 94.9(0.904)& 95.3(13.2) & 93.5(9.96) \\\hline
60 & 0 & 
	80.2(13.4) & 95.4(1.18)& 95.9(0.537)& 100.0(0.0) & 91.6(6.36)\\
& 0.25 & 
	76.8(16.7) & 95.1(1.27)& 95.8(0.671)& 100.0(0.0) & 91.1(8.13)\\
& 1 & 
	79.5(20.8) & 94.8(1.07)& 95.3(0.867)& 99.9(0.494) & 94.1(6.09)\\
& 4 & 
	84.3(14.7) & 95.2(0.585)& 94.5(1.01)& 96.8(4.68) & 92.6(8.67) \\\hline
40 & 0 & 
	68.4(22.5) & 95.3(1.61) & 95.8(0.855) & 100.0(0.0) & 88.5(11.1)\\
& 0.25 & 
	72.4(20.3) & 94.9(1.61) & 95.6(0.993) & 100.0(0.0) & 90.0(7.88)\\
& 1 & 
	75.6(19.4) & 94.0(1.52) & 95.0(1.15) & 99.8(0.526) & 89.1(8.69)\\
& 4 & 
	77.4(24.7) & 93.5(1.42) & 94.2(1.1) & 96.5(3.24) & 90.4(7.15) \\\hline
20 & 0 & 
	67.2(21.9) & 95.3(1.05) & 95.6(0.779) & 100.0(0.0) & 85.0(17.0)\\
& 0.25 & 
	75.3(15.8) & 94.9(1.08) & 95.4(0.978) & 100.0(0.0) & 85.8(13.5)\\
& 1 & 
	74.6(16.3) & 94.9(0.841) & 94.9(1.12) & 99.8(0.551) & 85.5(17.0)\\
& 4 & 
	71.6(13.0) & 94.5(0.466) & 94.2(0.993) & 98.3(1.5) & 79.6(17.6)
\end{tabular}
\caption{\footnotesize Empirical coverage (standard error) at the 95\% nominal level, averaged over 300 source and target data sets for each setting; the out-of-sample test sets generated with each of the 300 target data sets each contain 250 observations.
All reported values are multiplied by 100.}
\label{table: discrete coverage results}
\end{table}

\section{eICU Data}
\label{Section: eICU}

\subsection{Analysis Setup}

The eICU Collaborative Research Database \citep{pollard2018eicu} is a publicly available database of ICU encounters across multiple hospitals in the United States, making it well-suited for imitating transfer learning settings using real data.  In the spirit of the transfer learning application in \cite{wiens2014}, we focus on correctly diagnosing physiological shock for newly admitted ICU patients.  We define a binary response variable as the indicator of the event that a patient experienced shock upon ICU admission, using a combination of Internal Classification of Diseases 10 (ICD-10) codes: R57 Shock, not elsewhere classified; R58 Hemorrhage, not elsewhere classified; or R65.21 Severe sepsis with septic shock.  Features are limited to baseline variables measured at admission.  While the simulations of Section \ref{section: Simulation Study} explicitly link the source and target data through the data generation process, the similarity between source and targets defined in our eICU data application is unknown.

The data consist of measurements on 45,945 patients across 156 unique hospitals.  Only 700 of these patients were diagnosed with shock upon admission.  No individual hospital had enough positive cases to be reliably used as a source data set.  To curate a balanced data set, we take all 700 patients with shock and randomly sample an additional 700 patients with no shock.  Next, 80\% of the hospitals associated with our sampled 1,400 patients are randomly selected to define the `source hospital'.  The source data set consists of all ICU encounters at the `source hospital'.  Of the remaining 20\% of hospitals, half are randomly assigned to the `target training hospital', and the other half define a `target testing hospital'.  Notice that this procedure splits hospitals rather than patients; the source data set may not consist of 80\% of patients.  The target training and target testing data sets typically contain 80 to 130 patients each.

We repeat the described sampling procedure 300 times, to imitate 300 transfer learning scenarios from real data.  A logistic regression model and a DNN model are trained on each of the 300 source data sets, and all previously considered binary response transfer learning methods are implemented on the target data sets.  To boost the performance of the source DNN model, the architecture of the DNN is chosen from a set of candidate architectures by maximizing AUC, averaged over 100 of the source data sets; additional details are provided in Appendix \ref{appendix: Neural Network Training Procedure}.

\subsection{Results and Analysis}

In Figure \ref{eICU_sim_80_10_10}, we report the empirical coverage and AUC for all methods.
\begin{figure}[H]
\centering
\includegraphics[scale=.355]{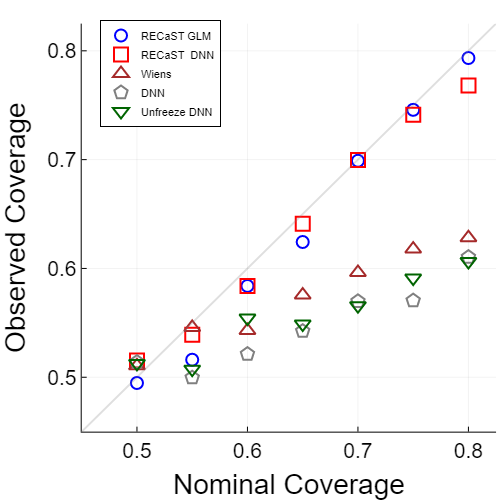}
\includegraphics[scale=.355]{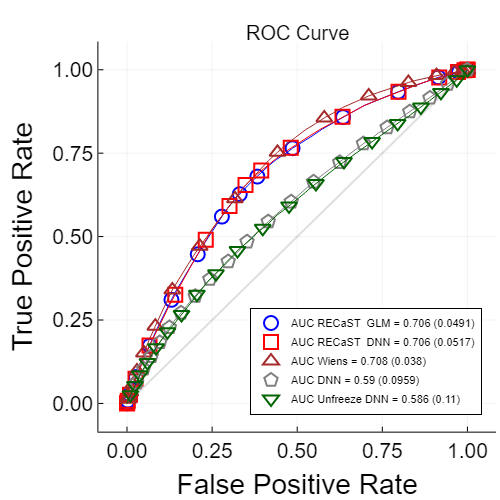}
\caption{\footnotesize The left panel displays the reliability curve of the nominal versus empirical out-of-sample coverage of prediction sets averaged over 300 target-testing data sets; the right panel reports the out-of-sample receiver operating characteristic (ROC) curve averaged pointwise over 300 target-testing data sets. The legend also reports the AUC (standard error) averaged over the same 300 target-testing data sets. Note that we cut the reliability curve at a nominal coverage of 0.8 because there are very few observations with higher coverage, undermining the reliability of coverage estimation at higher nominal levels.}
\label{eICU_sim_80_10_10}
\end{figure}
RECaST has similar predictive performance to Wiens but without requiring access to the source data, and it outperforms the DNN and Unfreeze DNN approaches.
Pairing RECaST with either the logistic regression or DNN source models produced near optimal average AUC, with respect to the average AUC values of 0.704 and 0.708, respectively, for the source logistic regression model and source DNN model.  Figure \ref{eICU_sim_80_10_10} also demonstrates that RECaST generally produces prediction sets that achieve their nominal level of coverage for target test response values, even for non-linear models with non-Gaussian data, whereas the other approaches do not.  This analysis demonstrates a general use case for RECaST as a clinical tool.

\section{Concluding Remarks}
\label{Section: conclusion}

We have demonstrated that the RECaST framework is adaptable to virtually any source model that makes predictions, and can accommodate both continuous and binary responses.  The source data themselves are not required, which is a significant advantage when legal or ethical barriers to access of source data sets exist, e.g., due to privacy concerns.  Unlike other transfer learning methods, RECaST always provides uncertainty quantification through prediction sets.  Our conclusions are supported by both theoretical justifications and performance in simulation studies on synthetic and real data.

The RECaST framework may be extended in several directions to accommodate the complexity of EHR data.
Broadening RECaST to handle differing feature spaces between source and target hospitals would allow for it to be applied in more general settings.  As EHR databases are updated, it would be useful to perform online transfer learning.  Patient clinical notes are also frequently available in EHR data and have been used by other transfer learning approaches \citep[e.g.,][]{si2020patient}. However, transfer learning approaches that combine quantitative and text features to create a unified patient representation are currently lacking.  Another promising research direction is to extend RECaST for trinomial label predictions \citep[see,][]{jacob2021,williams2021a}, count data, or time-to-event responses.
%

\section*{Acknowledgements}
Research reported in this publication was supported by the National Heart, Lung, And Blood Institute of the National Institutes of Health under Award Number R56HL155373. The content is solely the responsibility of the authors and does not necessarily represent the official views of the National Institutes of Health.

\appendix

\addcontentsline{toc}{section}{Appendices}
\renewcommand{\thesubsection}{\Alph{subsection}}

\section{Proofs}\label{appendix: proof}

\begin{proof}[Proof of Lemma \ref{lemma: cauchy}]
It is well-established \citep[see, e.g.,][]{hinkley1969ratio} that if $V\sim \mathcal{N}(0, \sigma_{V}^{2})$ and $W \sim \mathcal{N}(0, \sigma_{W}^{2})$ with correlation coefficient $\rho$, then
\begin{align}\label{hinkley_cauchy}
\frac{V}{W} \sim \text{Cauchy}\Big( \frac{ \rho \sigma_{V} }{ \sigma_{W} } , \frac{ \sigma_{V} }{ \sigma_{W} } \sqrt{ 1 - \rho^{2} }\Big).
\end{align}
Accordingly, since
\[
\begin{bmatrix}
\ba^{\top} \\ \bb^{\top}
\end{bmatrix}
\bx
\sim 
\mathcal{N}\bigg( 
    \begin{bmatrix}
        0 \\
        0
    \end{bmatrix},
    \begin{bmatrix}
       \ba^{\top} \ba & \ba^{\top} \bb \\
        \bb^{\top} \ba & \bb^{\top} \bb
    \end{bmatrix}
\bigg),
\]
it follows that $\bx^{\top}\ba \sim \mathcal{N}(0, \ba^{\top}\ba)$, $\bx^{\top}\bb \sim \mathcal{N}( 0, \bb^{\top}\bb)$, and $\rho = (\ba^{\top} \bb ) / (\norm{\bb} \norm{\ba})$.  The result follows from Equation \eqref{hinkley_cauchy} by taking $V = \bx^{\top}\ba$ and $W = \bx^{\top}\bb$.
\end{proof}

Before proceeding directly to the proof of Lemma \ref{lemma: convergence of delta and gamma}, the following necessary supporting result is stated and proved.

\begin{lemma}\label{lemma: MLEs}
The MLEs of $\gamma$ and $\delta$ for Equation \eqref{equation:joint_likelihood}, respectively, are
\begin{align*}
\widehat{\gamma} & = \frac{  \sum_{i=1}^{n_{T}}(v_{i} - \overline{v})(y_{i} - \overline{y}_{T})  }{ \widetilde{ \bx }^{\top}\btheta_S \sum_{i=1}^{n_{T}}(v_{i} - \overline{v})^{2} } \ \ \text{and} \\
\widehat{\delta} & = \frac{\overline{y}_{T}}{\widetilde{ \bx }^{\top}\btheta_S} - \overline{v}\cdot \widehat{\gamma}, 
\end{align*}
where $v_{i} = (\beta_{i} - \delta)/\gamma$ for $i \in \{1,\dots,n_{T}\}$, $\overline{v} := \sum_{i=1}^{n_{T}}v_{i} / n_{T}$ and $\overline{y}_{T} := \sum_{i=1}^{n_{T}}y_{T,i} / n_{T}$.
\end{lemma}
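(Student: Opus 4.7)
The plan is to exploit the reparameterization $\beta_i = \delta + \gamma v_i$ (equivalently, $v_i = (\beta_i - \delta)/\gamma$), so that the maximization problem implicit in \eqref{equation:joint_likelihood} reduces to an ordinary least-squares problem in $(\delta, \gamma)$. First, I would rewrite each Cauchy factor using the standard location-scale identity $\text{Cauchy}(\beta_i \mid \delta, \gamma) = \gamma^{-1}\,\text{Cauchy}(v_i \mid 0, 1)$. In the RECaST generative description, the $v_i$'s are the primitive iid $\text{Cauchy}(0,1)$ latent variables that produce $\beta_i$ via the affine map above, so the joint density of $(y_{T,i}, v_i)_{i=1}^{n_T}$ given $(\delta,\gamma)$ factors as
\[
\prod_{i=1}^{n_T} \mathcal{N}\bigl\{y_{T,i} \mid (\delta + \gamma v_i)\,\widetilde{\bx}^{\top} \btheta_{S},\, \sigma^{2}\bigr\} \cdot \text{Cauchy}(v_i \mid 0, 1),
\]
in which only the Gaussian kernel carries information about $(\delta, \gamma)$.

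Next, setting $a := \delta\, \widetilde{\bx}^{\top}\btheta_{S}$ and $b := \gamma\, \widetilde{\bx}^{\top}\btheta_{S}$, I would recognize the Gaussian factor as the likelihood for the simple linear regression $y_{T,i} = a + b\, v_i + \sigma\, \epsilon_i$ with regressors $v_i$ and iid $\mathcal{N}(0,1)$ errors $\epsilon_i$. The MLE for $(a, b)$ is therefore the usual OLS estimator: differentiating the sum of squares with respect to $a$ and $b$, equating to zero, and solving the resulting normal equations yields
\[
\widehat{b} \,=\, \frac{\sum_{i=1}^{n_T}(v_i - \overline{v})(y_{T,i} - \overline{y}_{T})}{\sum_{i=1}^{n_T}(v_i - \overline{v})^{2}}, \qquad \widehat{a} \,=\, \overline{y}_{T} - \overline{v}\, \widehat{b}.
\]
Dividing through by $\widetilde{\bx}^{\top}\btheta_{S}$ recovers the stated closed-form expressions for $\widehat{\gamma}$ and $\widehat{\delta}$.

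The main obstacle I expect is conceptual rather than computational: the substitution $v_i = (\beta_i - \delta)/\gamma$ involves the very parameters $(\delta, \gamma)$ being estimated, so some care is required in arguing that the natural frame in which to compute the MLE is the $(y_T, v)$ parameterization—where the $v_i$'s are the primitive iid $\text{Cauchy}(0,1)$ latent variables and hence free of $(\delta, \gamma)$—rather than the $(y_T, \beta)$ parameterization directly. Once that viewpoint is in place, the Cauchy density in $v_i$ drops out of the optimization, and the remaining MLE problem is the textbook OLS computation above.
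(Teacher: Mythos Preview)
Your proposal is correct and follows essentially the same route as the paper's proof: both perform the change of variables $v_i = (\beta_i - \delta)/\gamma$, observe that the resulting Cauchy$(v_i\mid 0,1)$ factor carries no information about $(\delta,\gamma)$, and then maximize the Gaussian part. The only cosmetic difference is that you invoke the OLS formulas directly after recognizing the regression structure $y_{T,i} = a + b\,v_i + \sigma\epsilon_i$, whereas the paper writes out the two first-order conditions explicitly and solves the linear system; these are the same computation, and your discussion of the conceptual point about treating the $v_i$ as primitive latent variables (so that the reparameterization is legitimate) is, if anything, more explicit than the paper's.
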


\begin{proof}[Proof of Lemma \ref{lemma: MLEs}]
After the change of variables $v_{i} = (\beta_{i} - \delta)/\gamma$ for $i \in \{1,\dots,n_{T}\}$, the likelihood function in Equation \eqref{equation:joint_likelihood} takes the form
\begin{align*}
\prod_{i=1}^{n_{T}}\Big[ \mathcal{N}\big\{ y_{T,i} \mid (\gamma v_{i} + \delta) \widetilde{ \bx }^{\top}\btheta_S, \sigma^{2} \big\} \cdot \text{Cauchy}( v_{i} \mid 0, 1 ) \Big].
\end{align*}
Taking partial derivatives with respect to $\delta$ and $\gamma$ gives the first-order conditions
\begin{align*}
\sum_{i=1}^{n_{T}}\Big\{\frac{y_{T,i}}{\widetilde{ \bx }^{\top}\btheta_S} - \gamma v_{i} - \delta\Big\} & = 0 \\
\sum_{i=1}^{n_{T}}\Big\{\frac{y_{T,i}}{\widetilde{ \bx }^{\top}\btheta_S} - \gamma v_{i} - \delta\Big\}v_{i} & = 0.
\end{align*}
Solving this system yields the MLEs in Lemma \ref{lemma: MLEs}.
\end{proof}

\begin{proof}[Proof of Lemma \ref{lemma: convergence of delta and gamma}]
With the assumptions that $Y_{T,1}, \dots, Y_{T,n_{T}} \overset{\text{iid}}{\sim} \mathcal{N}(\widetilde{\bx}^{\top} \btheta_{T}, \sigma^{2})$ independent of $V_{1},\dots,V_{n_{T}} \overset{\text{iid}}{\sim} \text{Cauchy}(0,1)$, first, define the following notations: 
\[
\boldsymbol{Y} := 
\begin{pmatrix}
Y_{T,1} \\ 
\vdots \\
Y_{T,n_{T}}
\end{pmatrix}, \ \ \overline{\boldsymbol{Y}} := \overline{Y}_{T} \cdot \boldsymbol{1}_{n_{T}}, \ \ \overline{Y}_{T} := \frac{1}{n_{T}}\sum_{i=1}^{n_{T}}Y_{T,i},
\]
and
\[
\boldsymbol{V} := 
\begin{pmatrix}
V_{1} \\ 
\vdots \\
V_{n_{T}}
\end{pmatrix}, \ \ \overline{\boldsymbol{V}} := \overline{V} \cdot \boldsymbol{1}_{n_{T}}, \ \ \overline{V} := \frac{1}{n_{T}}\sum_{i=1}^{n_{T}}V_{i},
\]
where $\boldsymbol{1}_{n_{T}}$ is an $n_{T}$-dimensional column vector with every component having value 1.  

By the Cauchy-Schwarz inequality,
\[
|\widehat{ \gamma }| = \frac{ \Big| \sum_{i=1}^{n_{T}}(V_{i} - \overline{V})(Y_{i} - \overline{Y}_{T}) \Big| }{ \big| \widetilde{ \bx }^{\top}\btheta_S \big| \sum_{i=1}^{n_{T}}(V_{i} - \overline{V})^{2} } \leq \frac{ \norm{\bY - \overline{\bY} }_{2} \norm{ \boldsymbol{V} - \overline{\boldsymbol{ V }}}_{2} }{ \big| \widetilde{ \bx }^{\top}\btheta_S \big| \norm{\boldsymbol{V} - \overline{\boldsymbol{V}}}_{2}^{2}} = \frac{ \norm{\bY - \overline{\bY} }_{2} }{ \big| \widetilde{ \bx }^{\top}\btheta_S \big| \norm{\boldsymbol{V} - \overline{\boldsymbol{V}}}_{2}}, 
\]
where $\norm{\cdot}_{2}$ is the Euclidean norm.  We first need to establish the fact that square-root sums of independent, centered, and squared Cauchy random variables grow in value at the rate of at least $n_{T}^{\alpha+\frac{1}{2}}$ for any $\alpha \in (0,1/2)$.  Accordingly, for any $\varepsilon > 0$ and any $\alpha \in (0,1/2)$,
\begin{align*}
P\bigg( \norm{\boldsymbol{V} - \overline{\boldsymbol{V}}}_{2} < n_{T}^{\alpha+\frac{1}{2}} \varepsilon^{-1} \bigg) & = P\bigg( \norm{\boldsymbol{V} - \overline{\boldsymbol{V}}}_{2}^{2} < n_{T}^{2\alpha+1} \varepsilon^{-2} \bigg) \\
& = P\bigg( \sum_{i=1}^{n_{T}} V_{i}^{2} - n_{T} \overline{ V }^{2} < n_{T}^{2\alpha+1} \varepsilon^{-2} \bigg) \\
& \le P\bigg( \sum_{i=1}^{n_{T}} V_{i}^{2} - n_{T}^{1+\alpha} < n_{T}^{2\alpha+1} \varepsilon^{-2} \bigg) + P\Big( - n_{T} \overline{ V }^{2} < -n_{T}^{1+\alpha} \Big) \\
& = P\bigg( \sum_{i=1}^{n_{T}} V_{i}^{2} < n_{T}^{2\alpha+1} \varepsilon^{-2} +  n_{T}^{1+\alpha} \bigg) + P\Big( |\overline{ V }| > n_{T}^{\frac{\alpha}{2}} \Big) \\
& \le P\bigg( \sum_{i=1}^{n_{T}} V_{i}^{2} < n_{T}^{2\alpha+1}\{\varepsilon^{-2} + 1\} \bigg) + 2 F_{V}\Big(-n_{T}^{\alpha / 2}\Big), \numberthis \label{equation:cauchybound}
\end{align*}
where $F_{V}(\cdot)$ is the $\text{Cauchy}(0,1)$ distribution function.  The first term vanishes for any $\alpha \in (0,1/2)$ as $n_{T} \to \infty$ by Lemma 2.1 in \cite{eicker1985}, and the second term vanishes as $n_{T} \to \infty$ by the definition of a distribution function.

Next, in order to show the convergence of both MLEs, we need that $n_{T}^{\alpha / 2}\widehat{\gamma} \to 0$ in probability as $n_{T} \to \infty$.  Our argument goes as follows.  For any $\varepsilon > 0$ and any $\alpha \in (0,1/2)$,
\begin{align*}
P\Big(|\widehat{\gamma}| > n_{T}^{-\alpha / 2} \varepsilon\Big) & \leq  P\Bigg( \frac{ \norm{\bY - \overline{\bY} }_{2} }{ \big| \widetilde{ \bx }^{\top}\btheta_S \big| \norm{\boldsymbol{V} - \overline{\boldsymbol{V}}}_{2}} > \frac{n_{T}^{ ( 1+\alpha ) / 2 }}{n_{T}^{ ( 1+\alpha ) / 2 }} \frac{\varepsilon}{n_{T}^{\alpha / 2}} \Bigg) \\
& \leq  P\Bigg( \frac{ \norm{\bY - \overline{ \bY }}_{2} }{ \big| \widetilde{ \bx }^{\top} \btheta_{S} \big|} > n_{T}^{ ( 1+\alpha ) / 2 } \Bigg) +  P\Bigg(\frac{ 1 }{ \norm{\boldsymbol{V} - \overline{\boldsymbol{V}}}_{2} } > \frac{ \varepsilon }{ n_{T}^{\alpha+1/2} } \Bigg)\\
& = P\Bigg( \frac{ \norm{\bY - \overline{ \bY }}_{2}^{2}}{\sigma^{2}} >  \frac{ \big| \widetilde{ \bx }^{\top} \btheta_{S} \big|^{2}}{\sigma^{2}}n_{T}^{1+\alpha} \Bigg) + P\bigg( \norm{\boldsymbol{V} - \overline{\boldsymbol{V}}}_{2} < n_{T}^{\alpha+\frac{1}{2}}\varepsilon^{-1} \bigg).
\end{align*}
Denoting $S := \norm{\bY - \overline{ \bY }}_{2}^{2} / \sigma^{2} \sim \chi_{n_{T}-1}^{2}$, and applying the Chernoff bound to the first quantity in the last expression gives, for any $t < 1/2$,
\[
P\Bigg( S >  \frac{ \big| \widetilde{ \bx }^{\top} \btheta_{S} \big|^{2}}{\sigma^{2}} n_{T}^{1+\alpha} \Bigg) 
\leq (1-2t)^{-(n_{T}-1)/2} \exp \bigg\{-\frac{ t\big| \widetilde{ \bx }^{\top} \btheta_{S} \big|^{2}}{\sigma^{2}} n_{T}^{1+\alpha}\bigg\}.
\]
Choosing $t = 1/4$ yields the bound
\[
P\Big(|\widehat{\gamma}| > n_{T}^{-\alpha / 2} \varepsilon\Big) \le e^{-n_{T}^{1+\alpha} \cdot \frac{1}{2}\big(\frac{1}{2\sigma^{2}}|\widetilde{ \bx }^{\top} \btheta_{S}|^{2} - n_{T}^{-\alpha} + n_{T}^{-1-\alpha}\big)} + P\bigg( \norm{\boldsymbol{V} - \overline{\boldsymbol{V}}}_{2} < n_{T}^{\alpha+\frac{1}{2}}\varepsilon^{-1} \bigg).
\]
Thus, by Equation \eqref{equation:cauchybound}, it follows that $n_{T}^{\alpha / 2}\widehat{\gamma} \to 0$ in probability as $n_{T} \to \infty$.  This fact implies that $\widehat{\gamma} \to 0$ in probability as $n_{T} \to \infty$, and is needed to prove the asymptotic convergence of $\widehat{\delta}$, next. 

Since $Y_{T,1}, \dots, Y_{T,n_{T}} \overset{\text{iid}}{\sim} \mathcal{N}(\widetilde{\bx}^{\top} \btheta_{T}, \sigma^{2})$, it follows that $\overline{Y}_{T} = \widetilde{ \bx }^{\top}\btheta_{T} + \sigma n_{T}^{-\frac{1}{2}}U$, where $U \sim \mathcal{N}(0, 1)$.  That being so, for any $\varepsilon > 0$ and any $\alpha \in (0,1/2)$,
\begin{align*}
P\bigg( \Big| \widehat{ \delta } - \frac{ \widetilde{ \bx }^{\top} \btheta_{T} }{ \widetilde{ \bx }^{\top}\btheta_{S} } \Big| > \varepsilon \bigg) & = P\bigg\{ \Big|\frac{1}{\widetilde{ \bx }^{\top}\btheta_S} \big( \widetilde{ \bx }^{\top}\btheta_{T} + \sigma n_{T}^{-\frac{1}{2}} U \big) - \overline{V} \widehat{\gamma} - \frac{ \widetilde{ \bx }^{\top} \btheta_{T} }{ \widetilde{ \bx }^{\top}\btheta_{S} } \Big| > \varepsilon \bigg\} \\
& = P\bigg( \Big| \frac{ \sigma }{ \widetilde{ \bx }^{\top} \btheta_{S} } n_{T}^{-\frac{1}{2}}U - \overline{ V } \widehat{ \gamma } \Big| > \varepsilon\bigg) \\
& \leq P\bigg( \Big| \frac{ \sigma }{ \widetilde{ \bx }^{\top} \btheta_{S} } n_{T}^{-\frac{1}{2}}U \Big| > \frac{ \varepsilon }{ 2 }\bigg) + P\Big( | \overline{ V } \widehat{ \gamma } | > \frac{ \varepsilon }{ 2 }\Big) \\
& = 2\Phi\Big\{ - n_{T}^{\frac{1}{2}} \cdot \varepsilon |\widetilde{ \bx }^{\top} \btheta_{S}| / (2\sigma) \Big\} + P\Big( | \overline{ V } | > n_{T}^{\alpha / 2} / 2 \Big) + P\Big( | \widehat{ \gamma } | > n_{T}^{-\alpha / 2} \varepsilon \Big) \\
& = 2\Phi\Big\{ - n_{T}^{\frac{1}{2}} \cdot \varepsilon |\widetilde{ \bx }^{\top} \btheta_{S}| / (2\sigma) \Big\} + 2 F_{V}\Big(-n_{T}^{\alpha / 2}/2\Big) + P\Big(|\widehat{\gamma}| > n_{T}^{-\alpha / 2} \varepsilon\Big),
\end{align*}
where $\Phi(\cdot)$ is the standard Gaussian distribution function.  The first two terms in the last expression vanish by the definition of a distribution function, and the third term vanishes by the same because we previously established that $n_{T}^{\alpha / 2}\widehat{\gamma} \to 0$ in probability as $n_{T} \to \infty$.  Hence, $\widehat{ \delta } \to \widetilde{ \bx }^{\top} \btheta_{T} / (\widetilde{ \bx }^{\top} \btheta_{S}) $ in probability as $n_{T} \to \infty$.
\end{proof}
\begin{proof}[Proof of Theorem \ref{theorem: convergence}]
Our argument begins with direct evaluation of the probability that $\widetilde{Y}_{T} \sim \mathcal{N}(\widetilde{\bx}^{\top} \btheta_{T}, \sigma^{2})$ is contained in the interval $[a_{n_{T}}^{\alpha}, b_{n_{T}}^{\alpha}]$, and it finishes by applying the result of Lemma \ref{lemma: convergence of delta and gamma}.
\begin{align*}
P\Big(\widetilde{Y}_{T} \in [a_{n_{T}}^{\alpha}, b_{n_{T}}^{\alpha}] \Big) & = \int_{a_{n_{T}}^{\alpha}}^{b_{n_{T}}^{\alpha}} \frac{1}{\sigma\sqrt{2\pi}}e^{-\frac{1}{2\sigma^{2}}(\widetilde{y}_{T} - \widetilde{\bx}^{\top} \btheta_{T})^{2}} d\widetilde{y}_{T} \\
& = \Phi\Big( \frac{ b_{n_{T}}^{\alpha} - \widetilde{ \bx }^{\top} \btheta_{T} }{ \sigma }\Big) - \Phi\Big( \frac{ a_{n_{T}}^{\alpha} - \widetilde{ \bx }^{\top} \btheta_{T} }{ \sigma }\Big),
\end{align*}
where $\Phi(\cdot)$ is the standard Gaussian distribution function.  We will first demonstrate that $\Phi(W) \to 1 - \alpha/2$, with
\begin{align*}
W & := \frac{ b_{n_{T}}^{\alpha} - \widetilde{ \bx }^{\top} \btheta_{T} }{ \sigma } \\
& = \Phi^{-1}\Big(1 - \frac{ \alpha }{ 2 }\Big) + \frac{ 1 }{ \sigma } \Big(\widetilde{ \beta } \cdot \widetilde{ \bx }^{\top} \btheta_{S} - \widetilde{ \bx }^{\top} \btheta_{T}\Big) \\
& \sim \text{Cauchy}\bigg\{\Phi^{-1}\Big(1 - \frac{ \alpha }{ 2 }\Big) + \frac{ 1 }{ \sigma } \Big(\widehat{ \delta } \cdot \widetilde{ \bx }^{\top} \btheta_{S} - \widetilde{ \bx }^{\top} \btheta_{T}\Big), \ \Big|\frac{ \widehat{ \gamma }  }{ \sigma } \widetilde{ \bx }^{\top} \btheta_{S}\Big| \bigg\}
\end{align*}
since $\widetilde{ \beta } \sim \text{Cauchy}(\widehat{ \delta }, |\widehat{ \gamma }|)$.

For any $\epsilon > 0$,
\begin{align*}
P\Big( |\Phi(W) - (1 - \alpha/2)| < \epsilon \Big) & = P\Big( 1 - \alpha/2 -\epsilon < \Phi(W) < 1 - \alpha/2 + \epsilon \Big) \\
& = P\Big\{ \Phi^{-1}(1 - \alpha/2 -\epsilon) < W < \Phi^{-1}(1 - \alpha/2 + \epsilon) \Big\} \\
& = F_{W}\Big\{\Phi^{-1}(1 - \alpha/2 + \epsilon) \Big\} - F_{W}\Big\{\Phi^{-1}(1 - \alpha/2 - \epsilon) \Big\},
\end{align*}
where $F_{W}(\cdot)$ is the Cauchy distribution function associated with $W$.  Then,
\[
F_{W}\Big\{\Phi^{-1}(1 - \alpha/2 + \epsilon) \Big\} 
= \frac{1}{2} + \frac{1}{\pi} \arctan\Bigg\{ \frac{ c_{1} - (\widehat{ \delta } \cdot \widetilde{ \bx }^{\top} \btheta_{S} - \widetilde{ \bx }^{\top} \btheta_{T}) / \sigma }{ |\widehat{ \gamma }\cdot\widetilde{ \bx }^{\top} \btheta_{S} | / \sigma} \Bigg\}, \\
\]
with $c_{1} := \Phi^{-1}(1 - \alpha/2 + \epsilon) - \Phi^{-1}(1 - \alpha/2) > 0$, and similarly,
\[
F_{W}\Big\{\Phi^{-1}(1 - \alpha/2 - \epsilon) \Big\} = \frac{1}{2} + \frac{1}{\pi} \arctan\Bigg\{ \frac{ c_{2} - (\widehat{ \delta } \cdot \widetilde{ \bx }^{\top} \btheta_{S} - \widetilde{ \bx }^{\top} \btheta_{T}) / \sigma }{ |\widehat{ \gamma }\cdot\widetilde{ \bx }^{\top} \btheta_{S} | / \sigma} \Bigg\}, \\
\]
with $c_{2} := \Phi^{-1}(1 - \alpha/2 - \epsilon) - \Phi^{-1}(1 - \alpha/2) < 0$.  Accordingly, it follows by Lemma \ref{lemma: convergence of delta and gamma} that
\[
F_{W}\Big\{\Phi^{-1}(1 - \alpha/2 + \epsilon) \Big\} \longrightarrow 1 \quad \text{and} \quad F_{W}\Big\{\Phi^{-1}(1 - \alpha/2 - \epsilon) \Big\} \longrightarrow 0
\]
in probability as $n_{T} \to \infty$, and so 
\[
\Phi\Big( \frac{ b_{n_{T}}^{\alpha} - \widetilde{ \bx }^{\top} \btheta_{T} }{ \sigma }\Big) = \Phi(W) \longrightarrow 1 - \alpha/2
\]
in probability as $n_{T} \to \infty$.  A similar argument shows that 
\[
\Phi\Big( \frac{ a_{n_{T}}^{\alpha} - \widetilde{ \bx }^{\top} \btheta_{T} }{ \sigma }\Big) \longrightarrow \alpha/2,
\]
in probability as $n_{T} \to \infty$, concluding the proof.
\end{proof}

\section{Bounding Continuous Integral}
\label{appendix: Bounding Continuous Integral}

Recall the posterior distribution of the calibration parameters for the continuous response setting,
\begin{align*}
& \pi\big(\delta, \gamma , \sigma \mid y_{T,1}, \dots, y_{T,n_{T}}, \widehat{\btheta}_{S}\big) \\
& = \pi(\delta, \gamma, \sigma) \cdot \prod_{i=1}^{n_{T}} \int_{\R} \frac{ \text{Cauchy}( \beta_{i} \mid \delta, \gamma ) }{ \mid f(\widehat{\btheta}_S, \bx_{T,i}) \mid } \cdot \mathcal{N}\bigg\{ \beta_{i} \ \mid \ \frac{  y_{T,i} }{ f(\widehat{\btheta}_{S}, \bx_{T,i}) }, \frac{ \sigma^{2} }{ f(\widehat{\btheta}_S, \bx_{T,i})^{2}  } \bigg\} \ d\beta_{i}.
\end{align*}
Calculating this posterior requires the evaluation of $n_{T}$ integrals over $\mathbb{R}$.  For computational efficiency, we estimate the posterior by integrating over closed intervals.  The incurred numerical error can be tuned to be lower than computer precision.

Performing the substitution $u_{i} = \big\{\beta_{i} - y_{T,i}/f(\widehat{ \btheta }_{S}, \bx_{T,i}) \big\} / \big\{ \sigma / | f(\widehat{ \btheta }_{S}, \bx_{T,i}) | \big\}$ re-expresses the $i$th integral as
\begin{align*}
& \int_{\R} 
    \frac{\mathcal{N}(u_{i} \mid 0, 1) }{ \sigma } \cdot 
     \text{Cauchy}\Bigg[  u_{i} \ \mid 
        \ \frac{ | f(\widehat{ \btheta }_{S}, \bx_{T,i}) | }{ \sigma } \bigg\{ \delta - \frac{ y_{T,i} }{ f(\widehat{ \btheta }_{S}, \bx_{T,i}) } \bigg\}
        ,\  \frac{ | f(\widehat{ \btheta }_{S}, \bx_{T,i}) | \gamma }{ \sigma } \Bigg]
du_{i} \\
    & \leq \frac{ 1 }{ \sigma } \Bigg(\int_{s_{1}}^{s_{2}}
            \mathcal{N}(u_{i} \mid 0, 1) \cdot 
             \text{Cauchy}\Bigg[  u_{i}\  \mid \
                \frac{ | f(\widehat{ \btheta }_{S}, \bx_{T,i}) | }{ \sigma } \bigg\{ \delta - \frac{ y_{T,i} }{ f(\widehat{ \btheta }_{S}, \bx_{T,i}) } \bigg\}
                , \  \frac{ | f(\widehat{ \btheta }_{S}, \bx_{T,i}) | \gamma }{ \sigma } \Bigg]
                du_{i} \\
    & \hspace{.5in} + \phi(s_{1}) + \phi(s_{2}) \Bigg),
\end{align*}
for any $s_{1}$ and $s_{2}$ satisfying $s_{1} < 0 < s_{2}$, where $\phi(\cdot)$ is the standard Gaussian density function.  Then choose $s_{1}$ and $s_{2}$ so that $\phi(s_{1}) + \phi(s_{2})$ is as small as desired.  For example, we set $s_{1} = -39$ and $s_{2} = 39$, giving $\phi(s_{1})$ and $\phi(s_{2})$ numerically equal to zero in the base \texttt{Julia} software (for comparison, $\phi(38) = 1.097\times 10^{-314}$).

\section{MCMC Implementation Details}
\label{appendix: MCMC Implementation Details}
Sections \ref{subsection: continuous remarks on implementation} and \ref{subsection: binary remarks on implementation} detail the procedure for sampling from the posterior predictive distribution of a new observation.
RECaST first estimates the joint posterior density of the re-calibration parameters $(\delta, \gamma, \sigma)$ in the linear model and $(\delta,\gamma)$ in the logistic model.
We specify disperse priors $\delta \sim \mathcal{N}(1, 400)$, $\log(\gamma) \sim \mathcal{N}(0, 9)$, and in the continuous setting $\log(\sigma^{2})  \sim \mathcal{N}(0,9)$.
We run the Metropolis-Hastings estimation algorithm of the posterior distribution for 100,000 iterations with the initial 20,000 iterations used as a burn-in period to tune the proposal variance.
The parameters from the final 50,000 iterations are used as the posterior distribution.
Finally, $n_{\text{post}}=300$ equally spaced triplets/pairs of this distribution are taken as a posterior sample to be used in the posterior predictive estimation, which we denote by $\{\delta_i,\gamma_i,\sigma_i|\}_{i=1}^{300}$ and $\{\delta_i,\gamma_i\}_{i=1}^{300}$ in the linear and logistic models respectively.
For each triplet/pair, a sample of $n_{\widetilde{ \beta }}= 300$ $\beta$'s are taken from the Cauchy distribution, each used to generate $n_{Y} = 300$ samples from the posterior predictive distribution.
This gives $300 \times 300 \times 300 = 27,000,000$ posterior predictive observations for each out-of-sample test point, $(Y_{T, \text{test}}, \widetilde{ \bx }_{T})$.

\section{Neural Network Training Procedure}
\label{appendix: Neural Network Training Procedure}
The following procedure is used to train all neural networks considered: the source DNN, the DNN trained only on target data, and the Unfreeze DNN.

We initialize the weights using Xavier initialization \citep{glorot2010understanding}.
The network is trained for 2500 epochs using the ADAM optimizer and an MSE loss.
A portion of the training data is set aside as an out-of-sample calibration set during training.
At each epoch, the training and calibration loss are tracked.
The final parameterization used is taken from the epoch with the lowest calibration loss to avoid overfitting to the training data set.

The candidate architectures ranged from networks with 316 parameters to 11,641 parameters with varied number of layers, layer sizes, activation functions, and dropout proportions.
The architecture described below was chosen as it had the best test set AUC on the eICU data of all considered architectures.
We use a two layer neural network with layer sizes $\ell_{1} = (p , 25)$ and $\ell_{2} = (25, 1)$.
These layers are connected with a Rectified Linear Unit (ReLU) activation function.
In the binary response setting, the output of $\ell_{2}$ is converted to a probability through a softmax activation function.
For consistency, this architecture is also used for the simulated data analysis in Section \ref{section: Simulation Study}.

The source neural network for RECaST learns parameters in both layers using only source data.
The DNN network learns parameters in both layers using only the target data.
The Unfreeze DNN network learns parameters in both layers first using only the source data.
Then, the target data are processed through the same neural network, re-training parameters in the second layer and leaving the first layer unchanged from the values learned on the source data set.

\bibliographystyle{apalike}
\bibliography{references}

\end{document}